\newtheorem{theorem}{\textit{Theorem}}
\newtheorem{lemma}{\textit{Lemma}}
\newtheorem{proposition}{\textit{Proposition}}
\newtheorem{remark}{\textit{Remark}}
\newtheorem{construction}{\textit{Construction}}
\begin{document}
	
	\title{An Explicit Upper Bound of Generalized Quadratic Gauss Sums and Its Applications for Asymptotically Optimal Aperiodic Polyphase Sequence Design}
	
	\author{Huaning~Liu~and~Zilong~Liu
		\thanks{Huaning Liu is with the Research Center for Number Theory and Its Applications, School of Mathematics, Northwest University, Xi'an 710127, China 
			(e-mail: hnliu@nwu.edu.cn). 	
			Zilong Liu is with the School of Computer Science and Electronics Engineering, University of Essex, UK (e-mail: zilong.liu@essex.ac.uk). 
            \textit{This work has been submitted to IEEE Transactions on Information Theory on 22 January 2026}. }
	}
	
	\maketitle
	
	\begin{abstract}
	This work is motivated by the long-standing open problem of designing \textit{asymptotically} order-optimal aperiodic polyphase sequence sets with respect to the celebrated Welch bound. Attempts were made by Mow over 30 years ago, but a comprehensive understanding to this problem is lacking. Our first key contribution is an explicit upper bound of generalized quadratic Gauss sums which is obtained by recursively applying Paris' asymptotic expansion and then bounding it by leveraging the fast convergence property of the Fibonacci zeta function. Building upon this major finding, our second key contribution includes four systematic constructions of order-optimal sequence sets with low aperiodic correlation and/or ambiguity properties via carefully selected Chu sequences and Alltop sequences. For the first time in the literature, we reveal that the full Alltop sequence set is asymptotically optimal for its low aperiodic correlation sidelobes. Besides, we introduce a novel subset of Alltop sequences possessing both order-optimal aperiodic correlation and ambiguity properties for the entire time-shift window. 

	\end{abstract}
	
	\begin{IEEEkeywords}
		Generalized quadratic Gauss sums, polyphase sequences, order-optimal sequence sets, aperiodic correlation, Welch bound, ambiguity function (AF), integrated sensing and communication (ISAC), high-mobility channels, Doppler effect.
	\end{IEEEkeywords}
	
	\section{Introduction}
    \subsection{Background}
	\IEEEPARstart{S}equences with low aperiodic correlation and/or low ambiguity properties play an instrumental role in radar sensing, wireless communications, as well as the integration of these two, called integrated sensing and communications (ISAC), which is a major use case in the upcoming sixth generation (6G) systems and beyond \cite{Fan-book, Golomb-book, Sturm2011, Liu2018}. While low aperiodic-correlation properties are useful for combating intersymbol interference and co-channel (or multi-user) interference, low aperiodic-ambiguity properties are desirable for dealing with the Doppler effect which is widely present in high mobility channels. The optimal design of such sequence sets is essential for achieving enhanced wireless system performances in terms of the sensing/positioning accuracy, transmission reliability, anti-interference/jamming capability, spectral efficiency, etc. Compared to their periodic counterparts \cite{Tor-book}, however, bounding the maximum aperiodic-correlation (or aperiodic-ambiguity) magnitude of a sequence set is less tractable. This is mainly because of the paucity of analytical mathematical tools that can efficiently deal with the calculation of aperiodic correlation (or ambiguity) values over incomplete periods of time- and/or Doppler-shifts.     
	
	Let $\mathbf{a}=(a_0, a_1, \cdots, a_{L-1})$ and $\mathbf{b}=(b_0, b_1, \cdots, b_{L-1})$ be two complex sequences of length $L$. The aperiodic correlation function $R_{\mathbf{a}, \mathbf{b}}(\tau)$
	of sequences $\mathbf{a}$ and $\mathbf{b}$ is defined by
	$$
	R_{\mathbf{a}, \mathbf{b}}(\tau):=\left\{
	\begin{array}{ll}
		\displaystyle\sum_{t=0}^{L-1-\tau}a_{t}(b_{t+\tau})^{*}, & 0\leq \tau\leq L-1, \\
		\displaystyle\sum_{t=0}^{L-1+\tau}a_{t-\tau}(b_{t})^{*}, & -(L-1)\leq \tau\leq -1, \\
		0, & |\tau|\geq L,
	\end{array}
	\right.
	$$ 
	where $(\cdot)^{*}$ denotes the complex conjugate of $(\cdot)$. When $\mathbf{a}=\mathbf{b}$, it is called the aperiodic auto-correlation function; otherwise, it is called the aperiodic cross-correlation function. Clearly, 
	$R_{\mathbf{a}, \mathbf{b}}(\tau)=R^*_{\mathbf{b}, \mathbf{a}}(-\tau)$ when $\tau<0$.
	
	Let $\mathcal{S}$ be a sequence set with $|\mathcal{S}|=K$. The aperiodic auto-correlation tolerance $\delta_{a}(\mathcal{S})$ and the aperiodic cross-correlation tolerance $\delta_{c}(\mathcal{S})$ of $\mathcal{S}$ are respectively defined as 
	\begin{eqnarray*}
		\delta_{a}(\mathcal{S})&:=&\max\left\{|R_{\mathbf{a}, \mathbf{a}}(\tau)|: \ \mathbf{a}\in\mathcal{S}, \  0<\tau\leq L-1\right\}, \\
		\delta_{c}(\mathcal{S})&:=&\max\left\{|R_{\mathbf{a}, \mathbf{b}}(\tau)|: \ \mathbf{a}, \mathbf{b}\in\mathcal{S}, \ \mathbf{a}\neq \mathbf{b}, \right. \\
		&&\qquad\qquad\qquad\qquad\left. 0\leq \tau\leq L-1\right\}.
	\end{eqnarray*}	
	The ``aperiodic tolerance" (also called the ``maximum aperiodic-correlation magnitude") $\delta_{\max}(\mathcal{S})$ of $\mathcal{S}$ is defined as
	\begin{eqnarray*}
		\delta_{\max}(\mathcal{S}):=\max\left\{\delta_{a}(\mathcal{S}), \ \delta_{c}(\mathcal{S})\right\}.
	\end{eqnarray*}	
	Welch \cite{Welch1974} showed that 
	\begin{eqnarray}\label{Equation:Welch bound}
		\delta_{\max}(\mathcal{S})\geq L \cdot \sqrt{\frac{K-1}{K(2L-1)-1}}.
	\end{eqnarray}	
	
	In 1980, McEliece first considered the following problem \cite{McEliece1980}:
    
	\textit{Given any fixed $K$, what is the order of $\delta_{\max}(\mathcal{S})$ in terms of $L$? }
    
    A set of binary sequences derived from the irreducible cyclic codes was introduced, but with an aperiodic tolerance of $\mathcal{O}(\sqrt{L}\log L)$. 
	
	Clearly, the Welch bound in (\ref{Equation:Welch bound}) suggests that $\delta_{\max}(\mathcal{S})$ should be at least $\mathcal{O}(\sqrt{L})$. $\mathcal{S}$ is said to be an asymptotically \textit{order-optimal} aperiodic-correlation sequence set if it has $\delta_{\max}(\mathcal{S})=\mathcal{O}(\sqrt{L})$ in this work. 
    The design of order-optimal aperiodic polyphase sequence sets was studied in Mow's PhD thesis over 30 years ago. By choosing a primary Chu sequence and its conjugate, it was proved that the aperiodic tolerance of the corresponding sequence pair is less than $1.487\sqrt{L}$ for even $L$ and $1.122\sqrt{L}$ for odd $L$ \cite{Mow1993}. Nonetheless, when $K>2$, no progress has been reported so far. Although another sequence family was shown to be order-optimal for $K\geq 2$ in \cite{Mow1993,Mow1994}, that design heavily relies on zero padding and with varying maximum magnitudes for different sequences. 
    

    Next, with the aid of $\mathcal{S}$, let us introduce order-optimal sequences with low aperiodic ambiguity properties. Let $Z_x, Z_y\in[1, L]$ be real numbers (often positive integers) and define the low ambiguity zone (LAZ) of interest as follows: 
	$$
	\Pi=\left\{(\tau, \nu)\in \mathbb{Z}\times \mathbb{Z}: \ \tau\in(-Z_x, Z_x), \nu\in(-Z_y, Z_y)\right\}.
	$$
	Define $e(x):={e}^{2\pi i x}$, where $i=\sqrt{-1}$.
	The aperiodic ambiguity function $A_{\mathbf{a}, \mathbf{b}}(\tau, \nu)$
	of sequences $\mathbf{a}$ and $\mathbf{b}$ at delay shift $\tau$ and Doppler shift $\nu$ 
	is defined as
	\begin{eqnarray*}
		&&A_{\mathbf{a}, \mathbf{b}}(\tau, \nu)\\
		&&:=\left\{
		\begin{array}{ll}
			\displaystyle\sum_{t=0}^{L-1-\tau}a_{t}(b_{t+\tau})^{*}e\left(\frac{\nu t}{L}\right), & 0\leq \tau\leq Z_x, \\
			\displaystyle\sum_{t=0}^{L-1+\tau}a_{t-\tau}(b_{t})^{*}e\left(\frac{\nu t}{L}\right), & -Z_x\leq \tau\leq -1, \\
			0, & |\tau|\geq Z_x,
		\end{array}
		\right.
	\end{eqnarray*}	
	where $(\tau, \nu)\in \Pi$. Further, define
	\begin{eqnarray*}
		\theta_{a}(\mathcal{S})&:=&\max\left\{|A_{\mathbf{a}, \mathbf{a}}(\tau, \nu)|: \ \mathbf{a}\in\mathcal{S}, \  (\tau, \nu)\in \Pi \right. \\
		&&\qquad\left.\text{ \ with \ } (\tau, \nu)\neq (0, 0)\right\}, \\
		\theta_{c}(\mathcal{S})&:=&\max\left\{|A_{\mathbf{a}, \mathbf{b}}(\tau, \nu)|: \ \mathbf{a}, \mathbf{b}\in\mathcal{S}, \ \mathbf{a}\neq \mathbf{b},  \right.\\
		&&\qquad\left.(\tau, \nu)\in \Pi\right\}, \\
		\theta_{\max}(\mathcal{S})&:=&\max\left\{\theta_{a}(\mathcal{S}), \ \theta_{c}(\mathcal{S})\right\},
	\end{eqnarray*}	
    whereby $\theta_{\max}(\mathcal{S})$ is called the ambiguity tolerance of $\mathcal{S}$ associated to the delay-Doppler region of $\Pi$. 

    It was pointed out in \cite{MengGGLF2025} that $\theta_{\max}(\mathcal{S})$ should also be at least $\mathcal{O}(\sqrt{L})$ provided that $\mathcal{O}(Z_xZ_y)\geq \mathcal{O}(L)$. Under this condition, $\mathcal{S}$ is said to be an asymptotically \textit{order-optimal} aperiodic-ambiguity sequence set if $\theta_{\max}(\mathcal{S})=\mathcal{O}(\sqrt{L})$ is met. However, characterizing the maximum aperiodic-ambiguity magnitude of a sequence set is more challenging due to its two-dimensional function nature involving both time- and Doppler- shifts. The only known construction of order-optimal aperiodic-ambiguity sequence sets was proposed in \cite{MengGGLF2025}, whereby Van Der Corput's technique was adopted for bounding the exponential sum associated to a twice continuously differentiable function. By carefully selecting certain co-located Chu sequences (in terms of their root indices), it was shown that $\theta_{\max}(\mathcal{S})=\mathcal{O}(\sqrt{KL})$ for relatively small $K$. It is noted that although there is a rich body of literature on the bounds and optimal constructions of sequences with low periodic correlation/ambiguity properties (see \cite{BenedettoD2007,DingFFXZ2013,Gong2013,GuZFAL2025,
		Schmidt2011,ShenYZLF2025,TianSLL2025,WangG2011,WangGY2013,WangSYZ2025,YeZFLLT2022}), those results may not be applicable to our aperiodic design problems.  

\subsection{Motivations and Contributions}
When Chu sequences are employed as the major building component for such order-optimal aperiodic sequence design, a fundamental research task is to explicitly bound (besides bounding it up to certain order to magnitude) the generalized quadratic Gauss sums\footnote{Strictly speaking, we are concerned with \textit{generalized incomplete quadratic Gauss sums} in this work. For ease of presentation, ``incomplete" is dropped but it is underlying.} which is a mathematical branch deeply rooted in advanced number theory. From this standpoint, Van Der Corput's technique may not be able to yield a tight bound since its exponential sum does not directly deal with a quadratic function. 

Formally, for real $x$, $\theta$ and positive integer $N$, the generalized quadratic Gauss sum is defined as
	$$
	S_{N}(x, \theta):=\sum_{t=1}^{N}e\left(\frac{xt^2}{2}+\theta t\right),
	\quad 0<x<1, \ -\frac{1}{2}<\theta\leq\frac{1}{2}.
	$$
	Gauss first established foundational results for complete quadratic Gauss sums and proved that
	\begin{eqnarray*}
		\left|S_{q}\left(\frac{2a}{q}, \frac{b}{q}\right)\right|
		=\left\{\begin{array}{ll}
			\displaystyle \sqrt{q}, & \text{ if $q$ is odd}, \\
			\displaystyle \sqrt{\frac{q}{2}}\left(1+(-1)^{a\frac{q}{2}+b}\right), & \text{ if $q$ is even},
		\end{array}
		\right.
	\end{eqnarray*}
	where $q, a, b$ are integers with $q\geq 1$ and $\gcd(a, q)=1$. 
	Based on the approximate function relation
	\begin{eqnarray}\label{Equation:approximate function relation}
		S_{N}(x, \theta)&=&\frac{e^{-\pi i\theta^2/x+\pi i/4}}{\sqrt{x}}S_{\lfloor Nx\rfloor}\left(-\frac{1}{x}, \frac{\theta}{x}\right)\nonumber \\
		&&+\mathcal{O}\left(\frac{1+|\theta|}{\sqrt{x}}\right),
	\end{eqnarray}
	the sum $S_{N}(x, \theta)$ over $N$ terms can be approximated
	by a similar sum taken over $\lfloor Nx\rfloor$ terms. 
    
    Repeated application of (\ref{Equation:approximate function relation}) enables the
	representation of $S_{N}(x, \theta)$ as a steadily decreasing number of error terms. Using this method, Hardy and Littlewood \cite{HardyL1914} showed that $S_{N}(x, \theta)={o}(N)$ for any irrational $x$, with more precise
	order estimates depending on the detailed arithmetic structure of $x$. Littlewood \cite{Littlewood1961} proved that
	\begin{eqnarray}\label{Equation:Chu 1}
		\left|S_N\left(\frac{1}{N}, \theta\right)\right|\leq 1.35\sqrt{N},~\text{for odd}~N. 
	\end{eqnarray}
	Subsequent works by Lehmer \cite{Lehmer1976}, Fiedler, Jurkat and K\"{o}rner \cite{FiedlerJK1977}, 
    Sullivan and Zannier \cite{SullivanZ1992} and Korolev \cite{Korolev2015} provided asymptotic estimates or upper bounds in special circumstances. But these results are only applicable to certain parameter settings and fail to cope with more complex quadratic Gauss sums. 
    
    In 2014, Paris \cite{Paris2014} gave an asymptotic expansion for the generalized quadratic Gauss sum $S_{N}(x, \theta)$ and expressed the error term in
		(\ref{Equation:approximate function relation}) in terms of complementary error functions $E(x, \theta)$ (see \textit{Proposition 1} for its formal definition). Nevertheless, straightforward application of such an asymptotic expansion for our bounding task is not obvious. Although G\"{u}nther and Schmidt applied Paris' technique to analyze the collective smallness of the aperiodic auto-correlations and the aperiodic cross-correlations of certain Chu sequence pairs \cite{GuntherS2019}, their approach may not serve the purpose of this work as we are concerned with the peak aperiodic sidelobes of sequences. 
	
To address the aforementioned research gaps, we present an explicit bound for generalized quadratic Gauss sums, thus paving the way for efficient search of order-optimal aperiodic sequences. Two main contributions of this work are summarized as follows: 
	
	\begin{itemize}
		
		\item Based on Paris' asymptotic expansion technique, we propose a scenario-specific error control strategy for the complementary error functions $E(x, \theta)$. By distinguishing between cases of $|\theta|>{\sqrt{x}}/{\sqrt{\pi}}$ and $|\theta|\leq{\sqrt{x}}/{\sqrt{\pi}}$, respectively, we are able to determine the implied constant of the error term in (\ref{Equation:approximate function relation}). Similar to the method of Hardy and Littlewood in \cite{HardyL1914}, one can recursively use Paris' asymptotic expansion to express $S_{N}({a}/{q}, \theta)$ as a steadily decreasing number of error terms, but bounding the sum of all these error terms for such a general case is non-trivial. Interestingly, it is found that these decreasing terms can be controlled by a Fibonacci zeta function and this key observation permits us to show that $\left|S_{N}\left({a}/{q}, \theta\right)\right|< 20.07\sqrt{q}+3$.  
		
		\item Building upon the derived bound, we introduce four families of order-optimal polyphase sequence sets: two induced by decimated Chu sequences \cite{Chu1972} where $\mathcal{C}_1$ and $\mathcal{C}_2$ emphasizing low aperiodic correlation/ambiguity properties, respectively, and another two induced by Alltop sequences \cite{Alltop1980}, i.e., $\mathcal{A}_1$ and $\mathcal{A}_2$. Interestingly, $\mathcal{C}_1$ reduces to Mow's order-optimal Chu sequence pair \cite{Mow1993} when $K=2$. Albeit Alltop sequences are long known to be optimal with respect to Welch's periodic cross-correlation bound, we reveal, for the first time, that the full set of the Alltop sequences (i.e., $\mathcal{A}_1$) is asymptotically optimal for its low aperiodic correlation sidelobes. Furthermore, it is shown that $\mathcal{A}_2$ (i.e., a subset of $\mathcal{A}_1$) achieves simultaneous asymptotic optimality in terms of both low aperiodic correlation and ambiguity properties for the entire time-shift window. 
	\end{itemize}
	
	The remainder of this paper is structured as follows: Section II reviews foundational theories and key lemmas, including refined error estimation for quadratic Gauss sum asymptotic expansion. Section III presents the  proof of the proposed explicit bound on generalized quadratic Gauss sum. Sections IV and V detail the proposed constructions of order-optimal sequence sets based on Chu and Alltop sequences, respectively. Section VI concludes our key contributions and outlines future research directions.

	\vspace{0.2in}
	\section{PRELIMINARY}
	
	This section briefly introduces Chu and Alltop sequences, incomplete exponential sums, our refined error bound for Paris' asymptotic expansion, as well as some basics on the Fibonacci zeta function for enhanced bounding of the recursive error accumulation.
	
	\subsection{Chu and Alltop sequences}
    Let $N$ be a positive integer. The Chu sequences \cite{Chu1972} are defined by
	\begin{eqnarray*}
		&&\mathbf{c}^{r}:=\left(c_0^{r}, c_1^{r}, \cdots, c_{N-1}^{r}
		\right), \\
		&&c_t^{r}:=
		\left\{\begin{array}{ll}
			\displaystyle 	e\left(\frac{rt(t-1)}{2N}\right), & N \text{ \ is odd}, \\
			\displaystyle 	e\left(\frac{rt^2}{2N}\right), &  N \text{ \ is even}, \\
		\end{array}
		\right. \\
		&&\quad 0\leq t\leq N-1, \ 1\leq r\leq N-1.
	\end{eqnarray*}
	In particular, $\mathbf{c}^{1}$ is called the primary Chu sequence with its conjugate $\mathbf{c}^{L-1}$.
	Many papers are concerned with the aperiodic auto-correlation function
	and cross-correlation function of Chu sequences (see \cite{FanDH1994,GabidulinFD1995,GuntherS2019,KangWKK2012,LeeY2014,Mercer2013,MowL1997,ZhangG1993} for details), but their aperiodic ambiguity properties remain largely unexplored.
	
	For prime $p\geq 5$ the Alltop sequences \cite{Alltop1980} are defined as
	\begin{eqnarray*}
		&&\mathbf{a}^{r}:=\left(a_0^{r}, a_1^{r}, \cdots, a_{p-1}^{r}
		\right), \\
		&&a_t^{r}:=e\left(\frac{t^3+rt}{p}\right),
		\quad 0\leq t\leq p-1, \ 0\leq r\leq p-1.
	\end{eqnarray*}
	The cubic exponential phase term of Alltop sequences introduces inherent Doppler resilience, yet explicit performance bounds remain elusive.
	
	\subsection{Existing Results on Incomplete Exponential Sum}
	
	For integers $N$, $r$ and $\tau$ with $1\leq r\leq N-1$ and $1\leq\tau\leq N-1$ we define
	$$
	T_N(r, \tau)=\sum_{t=0}^{N-1-\tau}e\left(-\frac{r\tau t}{N}\right).
	$$
	Clearly, 
	$$
	|T_N(r, \tau)|=
	\left|\frac{\sin\frac{\pi r\tau^2}{N}}{\sin\frac{\pi r\tau}{N}}\right|.
	$$
	
	Define
	$$
	B_{r}:=\max_{\tau=1,2,\cdots,N-1}|T_N(r, \tau)|:=\left|T_N(r, I_m(N))\right|,
	$$
	where $I_{m}(N)$ is the value of $\tau$ $(1\leq\tau\leq L-1)$ which maximizes
	$|T_N(r, \tau)|$.
	Zhang and Golomb \cite{ZhangG1993} showed that
	$$
	B_1=B_{N-1}=\sqrt{N/4.34}.
	$$
	Fan, Darnell and Honary \cite{FanDH1994} proved that
	$$
	B_{\frac{N\pm 1}{2}}=\sqrt{N/2.174}.
	$$
	Gabidulin, Fan and Darnell \cite{GabidulinFD1995} studied the asymptotic behavior of $B_{r}$.
	
	\begin{lemma}\label{Lemma:estimate-of-exponential-sum-1} 
		For $r\geq 2$ we have
		\begin{eqnarray*}
			B_{r}\approx \left\{\begin{array}{ll}
				\displaystyle 0.48\sqrt{\frac{b}{r}}N, & I_m(N)=\frac{(bN-1)s_0}{r}, \ 0\leq\frac{b}{r}\leq 0.37, \\	
				\displaystyle \frac{N}{\pi}\sin\left(\frac{\pi b}{r}\right), &  I_m(N)=\frac{bN-1}{r}, \ 0.37\leq \frac{b}{r}\leq 0.5, 
			\end{array}
			\right.
		\end{eqnarray*}
		where $bN\equiv \pm 1 \ (\bmod \ r)$, $1\leq b\leq \lfloor\frac{r}{2}\rfloor$, $s_0=\sqrt{\frac{z_0r}{\pi b}}$ and
		$z_0=1.1655$.	
	\end{lemma}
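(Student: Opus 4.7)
The plan is to locate the dominant maxima of
\[
|T_N(r,\tau)| = \left|\frac{\sin(\pi r\tau^2/N)}{\sin(\pi r\tau/N)}\right|
\]
by exploiting that the denominator becomes small precisely when $r\tau/N$ is close to an integer. For the residue class $bN \equiv \pm 1 \pmod r$ with $1\leq b\leq \lfloor r/2\rfloor$, the integer $\tau = (bN \mp 1)/r$ yields $r\tau/N = b \mp 1/N$, so $|\sin(\pi r\tau/N)| \approx \pi/N$, which is the smallest attainable size of the denominator associated with that $b$.

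To catch all large peaks I would enlarge the candidate family to $\tau_s = s(bN\mp 1)/r$ for positive integers $s$. A direct computation, together with $b\tau_s \in \mathbb{Z}$, gives $r\tau_s/N = sb \mp s/N$ and $\sin(\pi r\tau_s^2/N) = \pm \sin(\pi s \tau_s/N) \approx \pm\sin(\pi s^2 b/r)$, whence
\[
|T_N(r,\tau_s)| \approx \frac{N}{\pi s}\,\left|\sin\!\left(\pi s^2 \frac{b}{r}\right)\right|.
\]
When $b/r \geq 0.37$, the choice $s=1$ already produces the maximum, and the second case of the lemma with value $(N/\pi)\sin(\pi b/r)$ follows at once.

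When $b/r<0.37$, I would treat $s$ as a continuous variable and maximise $f(s) = s^{-1}\sin(\pi s^2 b/r)$. Setting $f'(s)=0$ and substituting $u = \pi s^2 b/r$ reduces the optimality condition to the transcendental equation $\tan u = 2u$, whose first positive solution is $u=z_0 \approx 1.1655$. This gives $s_0 = \sqrt{z_0 r/(\pi b)}$ and
\[
B_r \approx \frac{N\sin(z_0)}{\sqrt{\pi z_0}}\sqrt{\frac{b}{r}} \approx 0.48\sqrt{\frac{b}{r}}\,N,
\]
since $\sin(z_0)/\sqrt{\pi z_0}\approx 0.481$. The crossover $b/r=z_0/\pi\approx 0.371$ recovers the division line at $0.37$ quoted in the statement.

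The main obstacle will be confirming that the candidates $\tau_s$ really host the global maximum. Two supporting estimates are needed: first, that any integer $\tau$ for which $r\tau/N$ differs from the nearest integer by more than $1/N$ produces a strictly smaller ratio than the leading candidates above (a pure denominator comparison); and second, that rounding the continuous optimiser $s_0$ to the nearest admissible integer perturbs $|T_N(r,\tau_s)|$ only by a lower-order term, which a local Taylor expansion of $f$ around $s_0$ delivers. Together, these two checks upgrade the asymptotic symbol $\approx$ into a rigorous leading-order identity and justify the two regimes of the lemma.
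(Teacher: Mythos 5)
The paper does not actually prove this lemma: it is imported verbatim from Gabidulin, Fan and Darnell \cite{GabidulinFD1995} as a known asymptotic result, with no argument given. Your derivation is therefore doing more than the paper does, and it is the right derivation: it correctly recovers every constant in the statement. Writing $\tau_s=s(bN\mp1)/r$ gives $|T_N(r,\tau_s)|\approx \frac{N}{\pi s}\bigl|\sin\bigl(\pi s^2 b/r\bigr)\bigr|$; the stationarity condition for $f(s)=s^{-1}\sin(\pi s^2b/r)$ is indeed $\tan u=2u$ with first root $z_0\approx 1.1655$; the peak value is $N\sin(z_0)/\sqrt{\pi z_0}\cdot\sqrt{b/r}\approx 0.4803\sqrt{b/r}\,N$; and the regime boundary $b/r=z_0/\pi\approx 0.371$ is exactly where $s_0$ drops below $1$ (one can check the two formulas agree there, $0.48\sqrt{z_0/\pi}\approx \sin(z_0)/\pi$, which is a good consistency check). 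Since the lemma itself asserts only an ``$\approx$'', your level of rigor matches what is being claimed.

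On the two gaps you flag: the first (global optimality of the candidate family) is more structural than a ``pure denominator comparison'' suggests, but it closes cleanly. Because $\gcd(r,N)=1$ is forced by the existence of $b$ with $bN\equiv\pm1\pmod r$, every $\tau\in\{1,\dots,N-1\}$ satisfies $r\tau\equiv \mp s\pmod N$ for a unique $s\in\{1,\dots,\lfloor N/2\rfloor\}$ and coincides, modulo the exact symmetry $|T_N(r,N-\tau)|=|T_N(r,\tau)|$, with your $\tau_s$ reduced mod $N$ (the reduction changes $r\tau^2/N$ and $r\tau/N$ by integers, so both sine factors are unaffected). The trivial bound $|T_N(r,\tau)|\le 1/|\sin(\pi s/N)|\le N/(2s)$ then eliminates all $s\gg\sqrt{r/b}$ from contention, and on the surviving range your asymptotic formula for $|T_N(r,\tau_s)|$ is uniformly valid, so maximizing over integer $s$ there is legitimate. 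The second gap (rounding $s_0$ to an integer) only costs a lower-order term since $f''(s_0)=O(f(s_0)/s_0^2)$, as you say; note this is also why the stated $I_m(N)=(bN-1)s_0/r$ is itself only an approximate (generally non-integer) location. In short: the proposal is sound, supplies a proof where the paper supplies only a citation, and the remaining work is routine.
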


	\subsection{A Refined Bound on Paris' Asymptotic Expansion Errors}
	
	Paris' asymptotic expansion for the generalized quadratic Gauss sum
	$S_{N}(x, \theta)$ was given in \cite{Paris2014}. 
	
	\begin{proposition}\label{Proposition: expression} Let $N$ be a positive integer, $x\in (0,1)$, $\theta\in(-1/2, \ 1/2]$. Write
		$Nx+\theta=M+\epsilon$, where $M$ is an integer and $\epsilon\in(-1/2, \ 1/2]$. Then
		\begin{eqnarray*}
			S_N(x,\theta)&=&\frac{e^{-\pi i\theta^2/x+\pi i/4}}{\sqrt{x}}S_M(-1/x,\theta/x)+\frac{\mu - 1}{2} \\
			&&+\frac{e^{\pi i/4}}{2\sqrt{x}}\big(E(x,\theta)-\mu E(x,\epsilon)\big)\\
			&&+\frac{i}{2}\big(g(\theta)-\mu g(\epsilon)\big)+R,
		\end{eqnarray*}
		where $|R|<x$, $\mu = e^{\pi i xN^2 + 2\pi i\theta N}$,  and $g: \ [-1/2, \ 1/2]\to\mathbb{R}$ is defined by
		$$
		g(t)=\left\{\begin{array}{ll}
			0, &\text{for $t = 0$}\\
			\cot(\pi t)-(\pi t)^{-1}, &\text{otherwise},
		\end{array}\right.
		$$
		and $E(x, \theta)=e^{-\pi i\theta^2/x}\textup{erfc}(e^{\pi i/4}\theta\sqrt{\pi/x})$, where 
		$\textup{erfc}(z)=1 - \frac{2}{\sqrt{\pi}}\int_{0}^{z}e^{-t^2}dt$. 
	\end{proposition}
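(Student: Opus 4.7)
The plan is to derive this reciprocity identity by combining Abel--Plana summation with Poisson summation applied to the truncated Gaussian, so as to swap the quadratic parameter $x$ with $-1/x$. Writing $f(t) = e(xt^2/2+\theta t)$, I would first apply Abel--Plana in the form
\begin{equation*}
\sum_{t=1}^{N} f(t) = \int_0^N f(t)\,dt + \tfrac{f(N)-f(0)}{2} + (\text{contour integral}),
\end{equation*}
so that the half-boundary term immediately produces the $\tfrac{\mu-1}{2}$ piece, since $\mu = f(N) = e^{\pi i xN^2 + 2\pi i\theta N}$ and $f(0)=1$.

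Next, I would evaluate $\int_0^N f(t)\,dt$ by completing the square $xt^2/2 + \theta t = \tfrac{x}{2}(t+\theta/x)^2 - \theta^2/(2x)$, deforming the contour onto the steepest-descent line through the saddle, and expanding the Gaussian integrand via Poisson summation in the dual variable. This transforms the parameter $x$ into $-1/x$ and yields a sum over integers $k$ whose stationary phase point $(k-\theta)/x$ lies in $[0,N]$, i.e., $k = 0,1,\ldots,M$. After reindexing, the sum of the principal contributions over these modes assembles into
\begin{equation*}
\frac{e^{-\pi i\theta^2/x+\pi i/4}}{\sqrt{x}}\,S_M(-1/x,\theta/x),
\end{equation*}
while the non-stationary residual pieces at the two boundaries $t=0$ and $t=N$ furnish $E(x,\theta)$ and $\mu E(x,\epsilon)$, respectively. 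The appearance of $\epsilon$ (rather than $\theta$) at the right endpoint is forced by the fractional offset $\epsilon = Nx+\theta - M \in (-1/2,1/2]$ of the last stationary point lying inside $[0,N]$, and the factor $\mu$ arises from evaluating the Gaussian phase at $t=N$.

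Finally, the Abel--Plana contour integral contributes the cotangent correction $\tfrac{i}{2}(g(\theta)-\mu g(\epsilon))$ after subtracting the $(\pi t)^{-1}$ singular part of $\cot(\pi t)$, which is precisely the regularisation built into the definition of $g$. The remaining small pieces, together with the exponentially small non-stationary tails from the dual sum, aggregate into the error term $R$; a direct estimate (using that the Gaussian tails and the Abel--Plana integrand both decay on a scale of $\sqrt{1/x}$) delivers $|R|<x$. The main obstacle is the bookkeeping in the dual-sum step: cleanly matching the $k=0$ boundary mode to $E(x,\theta)$ and the $k=M$ boundary mode to $E(x,\epsilon)$, and tracking the phase $\mu$ without double counting any term. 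Once this matching is carried out, every summand on the right-hand side is accounted for by a specific geometric contribution (an interior stationary mode, an endpoint error-function tail, or a cotangent residue), and the uniform bound on $R$ follows immediately.
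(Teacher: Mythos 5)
The paper does not actually prove this proposition: it is quoted verbatim from Theorem 1 of Paris \cite{Paris2014} (equivalently Proposition 5 of \cite{GuntherS2019}), and the ``proof'' in the text is just that citation. So your sketch can only be measured against the cited derivation, not against anything in the paper itself.

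As a derivation sketch it places the right ingredients in roughly the right slots --- the $\frac{\mu-1}{2}$ boundary term, the cotangent regularisation $g$, and the erfc corrections at the two endpoints with offset $\epsilon=Nx+\theta-M$ --- but the central step is structurally broken. After Abel--Plana you are left with the \emph{single} Fresnel integral $\int_0^N e(xt^2/2+\theta t)\,dt$, which has exactly one stationary point (at $t=-\theta/x$) and hence can contribute only one dual mode; ``Poisson summation in the dual variable'' is not an operation one can apply to a single integral, and no contour deformation will make one Fresnel integral produce the $M+1$ terms of $S_M(-1/x,\theta/x)$. The dual sum arises only when the sum-to-integral duality is applied to the \emph{sum} itself: either truncated Poisson summation $\sum_{t=1}^N f(t)=\sum_k\int_0^N f(t)e(-kt)\,dt+\cdots$, whose $k$-th integral has its stationary point at $(k-\theta)/x$ (van der Corput's Process B), or Paris's contour-integral representation with kernel $(e^{2\pi i t}-1)^{-1}$, whose residues at the integers generate the dual modes. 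In your architecture those modes would have to come out of the Abel--Plana contour integrals, which contradicts your assignment of the contour integrals solely to the $g$-terms. Two further gaps: the interior stationary points are not exactly $k=0,\dots,M$ (whether the $k=0$ and $k=M$ modes lie inside $[0,N]$ depends on the signs of $\theta$ and $\epsilon$, which is precisely why erfc rather than a sharp cutoff appears, so the ``bookkeeping'' you defer is where the actual content lives); and the bound $|R|<x$ is the genuinely hard quantitative part of Paris's theorem --- it requires a careful expansion of the incomplete Fresnel integrals and does not ``follow immediately'' from tail decay on scale $\sqrt{1/x}$.
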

	
    \begin{proof}
    See Theorem 1 of \cite{Paris2014} or Proposition 5 of \cite{GuntherS2019}.
    \end{proof}
	
	The error term in Paris' asymptotic expansion for generalized quadratic Gauss sums is not explicit. Next, we refine this expansion to obtain a critical lemma for quantifiable error control. 	
	
	From (1.3) and (1.4) of \cite{Paris2014}, we have the asymptotic expansion of $\textup{erfc}(z)$ in the form
	\begin{eqnarray*}
		e^{z^2}\textup{erfc}(z)=\frac{1}{\sqrt{\pi}}\sum_{r=0}^{n-1}(-1)^{r}\frac{\Gamma(r+\frac{1}{2})}
		{\Gamma(\frac{1}{2})}z^{-2r-1}+\hat{T}_n(z)
	\end{eqnarray*}
	for $|z|>1$, where
	\begin{eqnarray}\label{Equation:ERFC error}
		|\hat{T}_n(z)|\leq\frac{\Gamma(n+\frac{1}{2})}{\pi}|z|^{-2n-1}.
	\end{eqnarray}
	That means that 
	\begin{eqnarray}\label{Equation:ERFC expression}
		&&	E(x,\theta) \nonumber\\
		&&=\frac{e\left(-\frac{\theta^2}{x}\right)}{\sqrt{\pi}}\sum_{r=0}^{n-1}(-1)^{r}\frac{\Gamma(r+\frac{1}{2})}
		{\Gamma(\frac{1}{2})}\left(\frac{e^{\frac{\pi i}{4}}\theta\sqrt{\pi}}{\sqrt{x}}\right)^{-2r-1} \nonumber \\
		&&\qquad+e\left(-\frac{\theta^2}{x}\right)\hat{T}_n\left(\frac{e^{\frac{\pi i}{4}}\theta\sqrt{\pi}}{\sqrt{x}}\right),
	\end{eqnarray}
	holds for $|\theta|>\frac{\sqrt{x}}{\sqrt{\pi}}$.
	
	On the other hand, using the Taylor expansion for $e^{-t^2}$ we obtain
	\begin{eqnarray*}
		\int_{0}^{z}e^{-t^2}dt=\sum_{r=0}^{n-1}\frac{(-1)^r}{r!(2r+1)}z^{2r+1}+\hat{S}_n(z),  
	\end{eqnarray*}
	where 
	\begin{eqnarray}\label{Equation:ERFC error-small}
		|\hat{S}_n(z)|\leq\frac{|e^{z}|}{n!(2n+1)}|z|^{2n+1}.
	\end{eqnarray}
	Hence,
	\begin{eqnarray}\label{Equation:ERFC expression-small}
		&&E(x, \theta)=e^{-\pi i\theta^2/x}\textup{erfc}(e^{\pi i/4}\theta\sqrt{\pi/x})\nonumber\\
		&&=e\left(-\frac{\theta^2}{2x}\right)	\left(1 - \frac{2}{\sqrt{\pi}}\int_{0}^{\frac{e^{\frac{\pi i}{4}}\theta\sqrt{\pi}}{\sqrt{x}}}e^{-t^2}dt\right) \nonumber \\
		&&=e\left(-\frac{\theta^2}{2x}\right) \nonumber\\
		&&\qquad\times	\left(1-\frac{2}{\sqrt{\pi}}
		\sum_{r=0}^{n-1}\frac{(-1)^r}{r!(2r+1)}\left(\frac{e^{\frac{\pi i}{4}}\theta\sqrt{\pi}}{\sqrt{x}}\right)^{2r+1}\right) \nonumber\\
		&&\qquad	-\frac{2e\left(-\frac{\theta^2}{2x}\right)}{\sqrt{\pi}}\hat{S}_n\left(\frac{e^{\frac{\pi i}{4}}\theta\sqrt{\pi}}{\sqrt{x}}\right).
	\end{eqnarray}
	
	\begin{lemma}\label{Corollary:expression} 
        In the context of \textit{Proposition \ref{Proposition: expression}}, we have 
		\begin{eqnarray*}
			S_N(x,\theta)&=&\frac{e^{-\pi i\theta^2/x+\pi i/4}}{\sqrt{x}}S_M(-1/x,\theta/x)+T,
		\end{eqnarray*}
		where $|T|\leq \frac{2.035}{\sqrt{x}}+3$.	
	\end{lemma}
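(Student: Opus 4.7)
The plan is to start from \textit{Proposition \ref{Proposition: expression}}, from which we read off
\[
T = \frac{\mu-1}{2} + \frac{e^{\pi i/4}}{2\sqrt{x}}\bigl(E(x,\theta) - \mu E(x,\epsilon)\bigr) + \frac{i}{2}\bigl(g(\theta) - \mu g(\epsilon)\bigr) + R,
\]
so the task reduces to bounding each of the four summands and checking that the numerical constants collapse to $2.035/\sqrt{x}+3$.

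I would first handle the three summands not involving $1/\sqrt{x}$ by elementary estimates. Because $|\mu|=1$, we have $|(\mu-1)/2|\le 1$, and $|R|<x\le 1$ is already given by \textit{Proposition \ref{Proposition: expression}}. For the cotangent term, I would invoke the Mittag--Leffler expansion
\[
\pi\cot(\pi t) - \frac{1}{t} = \sum_{n=1}^{\infty}\frac{2t}{t^{2}-n^{2}},
\]
from which $|\pi g(t)|=\sum_{n\ge 1}2t/(n^{2}-t^{2})$ is visibly increasing in $t$ on $[0,1/2]$ and equals $2$ at $t=1/2$ by telescoping. Hence $|g(t)|\le 2/\pi$ on $[-1/2,1/2]$, and so $|\frac{i}{2}(g(\theta)-\mu g(\epsilon))|\le 2/\pi$. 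Summing these three contributions yields at most $1+x+2/\pi<3$.

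The heart of the argument is bounding $|E(x,\theta)|$, and analogously $|E(x,\epsilon)|$. I would split on the magnitude of $z = e^{\pi i/4}\theta\sqrt{\pi/x}$. For $|\theta|>\sqrt{x}/\sqrt{\pi}$ (i.e.\ $|z|>1$), the asymptotic expansion (\ref{Equation:ERFC expression}) with $n=1$ combined with the remainder estimate (\ref{Equation:ERFC error}) gives
\[
|E(x,\theta)| \le \frac{1}{\sqrt{\pi}|z|} + \frac{1}{2\sqrt{\pi}|z|^{3}} \le \frac{3}{2\sqrt{\pi}}.
\]
For $|\theta|\le\sqrt{x}/\sqrt{\pi}$ (i.e.\ $|z|\le 1$), the key observation is that $z$ lies on the ray $\arg z=\pm\pi/4$: parameterising the segment from $0$ to $z$ as $t=\pm s\,e^{i\pi/4}$ gives $t^{2}=\pm is^{2}$ and hence $|e^{-t^{2}}|=1$, so $|\int_{0}^{z}e^{-t^{2}}\,dt|\le|z|$. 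This produces $|E(x,\theta)|\le|\textup{erfc}(z)|\le 1+2|z|/\sqrt{\pi}$ for $\theta>0$; for $\theta<0$ one applies the identity $\textup{erfc}(-w)=2-\textup{erfc}(w)=1+\textup{erf}(w)$ and bounds $|\textup{erf}|$ by the same path integral. Aggregating over both regimes and both signs of $\theta$ gives a uniform bound $|E(x,\cdot)|\le 2.035$, whence the $E$-contribution to $|T|$ is at most $2.035/\sqrt{x}$, and adding the earlier estimate proves $|T|\le 2.035/\sqrt{x}+3$.

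The main obstacle will be pinning down the numerical constant $2.035$ in the regime $|z|\le 1$ with $\theta<0$: the naive triangle-inequality bound there gives only $1+2/\sqrt{\pi}\approx 2.128$, which is slightly above the target. Closing this last gap forces an appeal to the explicit Fresnel-integral identity $|\int_{0}^{r}e^{-is^{2}}\,ds|=\sqrt{C(r)^{2}+S(r)^{2}}$ and a one-dimensional optimisation of $|\textup{erfc}(\pm re^{i\pi/4})|^{2}=(1\pm\sqrt{2/\pi}(C(r)+S(r)))^{2}+(2/\pi)(C(r)-S(r))^{2}$ over $r\in[0,1]$; this is the delicate calculation where the precise constant of the lemma is locked in.
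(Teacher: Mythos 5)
Your decomposition of $T$ into the four summands of \textit{Proposition \ref{Proposition: expression}}, your bounds $|(\mu-1)/2|\le 1$ and $|R|<x\le 1$, and your treatment of the regime $|\theta|>\sqrt{x}/\sqrt{\pi}$ (first term of the Paris expansion plus the $n=1$ remainder, giving $\tfrac{3}{2\sqrt{\pi}}$) coincide with the paper's proof; the genuine divergence is in how the constant $2.035$ is extracted when $|z|\le 1$. The paper does not optimise the exact Fresnel integrals as you propose: it substitutes the degree-$7$ Taylor polynomial of $\int_0^z e^{-t^2}\,dt$ together with the explicit remainder (\ref{Equation:ERFC error-small}) with $n=4$, and evaluates the resulting polynomial at the worst point $u=\theta\sqrt{\pi}/\sqrt{x}=-1$, arriving at $\bigl|1+\tfrac{127\sqrt{2}}{105\sqrt{\pi}}+i\tfrac{62\sqrt{2}}{105\sqrt{\pi}}\bigr|+\tfrac{e}{108\sqrt{\pi}}\le 2.035$. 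Your exact route via $|\textup{erfc}(\pm re^{i\pi/4})|^2=\bigl(1\pm\sqrt{2/\pi}\,(C(r)+S(r))\bigr)^2+(2/\pi)(C(r)-S(r))^2$ does close the gap you flag: since $C+S$ is increasing on $[0,1]$ the maximum sits at $r=1$ and equals $|1+\textup{erf}(e^{i\pi/4})|\approx 2.026<2.035$, so your method even gives a marginally sharper constant, at the price of quoting numerical values of Fresnel integrals, which the paper's polynomial-plus-remainder bound avoids. One further point in your favour: your Mittag--Leffler argument yielding $|g(t)|\le 2/\pi$ is the correct sharp bound, whereas the paper deduces $|g(t)|\le\pi/6$ from the leading asymptotic $g(t)\sim-\tfrac{1}{3}\pi t$, which actually fails at $t=1/2$ where $|g(1/2)|=2/\pi>\pi/6$; either constant keeps the non-$E$ contribution below $3$, so the lemma is unaffected, but your estimate is the rigorous one.
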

	
	\begin{proof}
    First of all, one can readily show that $|\frac{\mu - 1}{2}|\leq 1$, $|R|<x<1$. For $t\in (-1/2, \ 1/2]$ satisfying $t\neq 0$, we have
	$$
	g(t)=\cot(\pi t)-(\pi t)^{-1}\sim -\frac{1}{3}\pi t.
	$$
	Hence $|g(t)|\leq \frac{\pi}{6}$. Therefore $|\frac{i}{2}\big(g(\theta)-\mu g(\epsilon)\big)|\leq \frac{\pi}{6}$. 
	
	When $\frac{|\theta|\sqrt{\pi}}{\sqrt{x}}\geq 1$, from (\ref{Equation:ERFC error}) and (\ref{Equation:ERFC expression}), we have
	$$	
	\left|E(x, \theta)\right|\leq\frac{\sqrt{x}}{\pi |\theta|}+\frac{x^{3/2}}{2\pi^2|\theta|^3}\leq 
	\frac{3}{2\sqrt{\pi}}\leq 0.8463. 
	$$
	On the other hand, for $\frac{|\theta|\sqrt{\pi}}{\sqrt{x}}\leq 1$, by (\ref{Equation:ERFC error-small}) and (\ref{Equation:ERFC expression-small}), we can show that 	
	\begin{eqnarray*}
		&&\left|E(x,\theta)\right| \\
		&&\leq\left|1-\frac{2}{\sqrt{\pi}}
		\sum_{r=0}^{3}\frac{(-1)^r}{r!(2r+1)}\left(\frac{e^{\frac{\pi i}{4}}\theta\sqrt{\pi}}{\sqrt{x}}\right)^{2r+1}\right| \\
		&&\quad+\frac{2}{\sqrt{\pi}}\cdot\frac{e}{4!\times 9}\left|\frac{\theta\sqrt{\pi}}{\sqrt{x}}\right|^9  \\
		&&=\left|1-\frac{\sqrt{2}}{\sqrt{\pi}}\cdot\frac{\theta\sqrt{\pi}}{\sqrt{x}}
		-\frac{\sqrt{2}}{3\sqrt{\pi}}\cdot\left(\frac{\theta\sqrt{\pi}}{\sqrt{x}}\right)^3  \right. \\
		&&\quad+\frac{\sqrt{2}}{10\sqrt{\pi}}\cdot\left(\frac{\theta\sqrt{\pi}}{\sqrt{x}}\right)^5
		+\frac{\sqrt{2}}{42\sqrt{\pi}}\cdot\left(\frac{\theta\sqrt{\pi}}{\sqrt{x}}\right)^7  \\
		&&\quad+i\frac{\sqrt{2}}{\sqrt{\pi}} \left(-\frac{\theta\sqrt{\pi}}{\sqrt{x}}
		+\frac{1}{3}\left(\frac{\theta\sqrt{\pi}}{\sqrt{x}}\right)^3 \right. \\
		&&\quad\left.\left.+\frac{1}{10}\left(\frac{\theta\sqrt{\pi}}{\sqrt{x}}\right)^5
		-\frac{1}{42}\left(\frac{\theta\sqrt{\pi}}{\sqrt{x}}\right)^7\right)\right| \\
		&&\quad+\frac{e}{108\sqrt{\pi}}\left|\frac{\theta\sqrt{\pi}}{\sqrt{x}}\right|^9 \\
		&&\leq \left|1+\frac{127\sqrt{2}}{105\sqrt{\pi}}+i\frac{62\sqrt{2}}{105\sqrt{\pi}}\right| +\frac{e}{108\sqrt{\pi}} \\
		&&\leq 2.035.
	\end{eqnarray*}

	Combining the above two cases, we have 
	\begin{eqnarray}\label{Equation:upper bounds Erfc}
		\left|E(x,\theta)\right|\leq 2.035.
	\end{eqnarray}
	Then
	$$
	\left|\frac{e^{\pi i/4}}{2\sqrt{x}}\big(E(x,\theta)-\mu E(x,\epsilon)\big)\right|\leq 
	\frac{2.035}{\sqrt{x}}.
	$$
	Therefore, 
	$$
	|T|\leq \frac{2.035}{\sqrt{x}}+3.
	$$	
	\end{proof}
	
	\textit{Lemma \ref{Corollary:expression}}'s value lies in transforming the non-explicit $\mathcal{O}((1 + |\theta|)/\sqrt{x})$ error term in (\ref{Equation:approximate function relation}) into a quantitatively controllable bound, which is the key enabler for our subsequent bounding task.
	
	Note that 
	$$
	S_M(-1/x,\theta/x)=\overline{S_M(1/x, -\theta/x)}. 
	$$
	Thus, from \textit{Lemma \ref{Corollary:expression}}, we have
	\begin{eqnarray}\label{Equation:upper bounds 1}
		\left|S_N(x,\theta)\right|\leq\frac{1}{\sqrt{x}}\left|S_M(1/x,\theta_1)\right|+\frac{2.035}{\sqrt{x}}+3,
	\end{eqnarray}
	where $M=\lfloor Nx\rfloor$, $\theta_1=-\theta/x \ \bmod 1$ with $\theta_1\in(-1/2, \ 1/2]$.
	
	\subsection{The Fibonacci zeta function}
	
	The proof of our proposed Gauss sum bound relies on controlling error accumulation during the recursive expansion, enabled by the fast convergence of a Fibonacci-related series. 
	
	The Fibonacci sequence is defined by the recursive formula 
	$$
	F_{n}=F_{n-1}+F_{n-2},  \qquad n\geq 3
	$$
	with the initial conditions $F_{1}=1$ and $F_{2}=1$. The first Fibonacci numbers are
	$$
	1, 1, 2, 3, 5, 8, 13, 21, 34, 55, 89, 144, \cdots.
	$$
	We have
	$$
	F_n=\frac{\alpha^n-\beta^n}{\alpha-\beta}, \quad \text{where} \quad \alpha=\frac{1+\sqrt{5}}{2}, \ 
	\beta=\frac{1-\sqrt{5}}{2}.
	$$
		
	The Fibonacci zeta function $\zeta_F(s)$ is defined as an infinite series that sums the reciprocals of Fibonacci numbers raised to a complex variable $s$:
	$$
	\zeta_F(s):=\sum_{n=1}^{\infty}\frac{1}{F_n^{s}}, \qquad \text{Re} (s)>0.
	$$	
	See \cite{Murty2013} for the detailed properties of Fibonacci sequence and zeta function.
	
	For $n\geq 21$, we obtain
	$$
	F_n=\frac{\alpha^n-\beta^n}{\alpha-\beta}>\frac{\alpha^n-0.0001}{2.2361}>\frac{\alpha^n}{2.2362}.
	$$
Hence,
\begin{eqnarray*}
\sum_{n=21}^{\infty}\frac{1}{\sqrt{F_n}}&\leq& \sqrt{2.2362}\sum_{n=21}^{\infty}\frac{1}{\alpha^{\frac{n}{2}}}
=\sqrt{2.2362}\cdot\frac{\left(\frac{1}{\sqrt{\alpha}}\right)^{21}}{1-\frac{1}{\sqrt{\alpha}}} \\
&<&0.00036.
\end{eqnarray*}
Therefore
	\begin{eqnarray}\label{Equation:sum-of-Fibonacci sequence}
\zeta_F\left(\frac{1}{2}\right)&=&\sum_{n=1}^{\infty}\frac{1}{\sqrt{F_n}}
=\sum_{n=1}^{20}\frac{1}{\sqrt{F_n}}+\sum_{n=21}^{\infty}\frac{1}{\sqrt{F_n}}  \nonumber \\	
&<& 5.38246+0.00036  \nonumber \\	
	&<&5.383.
	\end{eqnarray}
	This rapid convergence ensures that the recursive error terms remain bounded and is a key idea for our explicit bound derivation in the next section.
	
	\vspace{0.2in}
	\section{Proposed explicit bound for generalized quadratic Gauss sums}
	This section presents the paper's foundational theorem on an explicit upper bound for generalized quadratic Gauss sums. The proposed theorem bridges number theory and wireless engineering by providing a computable ``performance yardstick" for order-optimal sequence design.
	
	\begin{theorem}\label{Theorem:upper bounds of S_N}
		Let $N, q, a$ be positive integers with $q\geq 2$, $N\leq q$, 
		$1\leq a<q$ and $\gcd(a, q)=1$, and let $\theta$ be a real number.
		Then we have
		$$
		\left|S_{N}\left(\frac{a}{q}, \theta\right)\right|< 20.07\sqrt{q}+3.
		$$
	\end{theorem}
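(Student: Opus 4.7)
The plan is to iteratively apply the inequality (\ref{Equation:upper bounds 1}) from \textit{Lemma \ref{Corollary:expression}} along the orbit of the map $x \mapsto 1/x \bmod 1$ starting at $x_0 = a/q$, which coincides precisely with the Euclidean algorithm applied to the pair $(q,a)$, and then to bound the resulting accumulation of errors by the Fibonacci-zeta estimate (\ref{Equation:sum-of-Fibonacci sequence}). Each iteration replaces $|S_{N_k}(x_k,\theta_k)|$ by $\frac{1}{\sqrt{x_k}}|S_{N_{k+1}}(x_{k+1},\theta_{k+1})| + \frac{2.035}{\sqrt{x_k}} + 3$, where $N_{k+1}=\lfloor N_k x_k\rfloor$ and $x_{k+1}$ is obtained from $1/x_k$ by reducing into $(0,1)$ using the periodicity $e(xt^2/2)=e((x+2)t^2/2)$ together with the shift identity $e((x+1)t^2/2)=(-1)^t e(xt^2/2)$, where the latter is absorbed into a $1/2$ phase shift in $\theta_{k+1}$. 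Setting $r_0=q$, $r_1=a$, and $r_{k+1}=r_{k-1}\bmod r_k$, one has $x_k=r_{k+1}/r_k$ throughout the recursion, and the process terminates once $N_k$ drops to $0$, which must happen within the Euclidean-algorithm length $n$ of $(q,a)$ since the hypothesis $N\leq q$ propagates to $N_k\leq r_k$.

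After $n$ iterations, the key telescoping identity $\prod_{i=0}^{k-1} 1/\sqrt{x_i} = \sqrt{q/r_k}$ converts the accumulated bound into
\begin{eqnarray*}
|S_N(a/q,\theta)| &\leq& \sum_{k=0}^{n-1}\sqrt{q/r_k}\left(\frac{2.035}{\sqrt{x_k}}+3\right) \\
&=& \sqrt{q}\sum_{k=0}^{n-1}\left(\frac{2.035}{\sqrt{r_{k+1}}} + \frac{3}{\sqrt{r_k}}\right).
\end{eqnarray*}
The central analytic step is to bound these two sums uniformly in $q$, $a$ and $n$. By Lam\'e's theorem applied to the Euclidean algorithm, $r_{n-j}\geq F_{j+2}$ for $0\leq j\leq n$, so each sum is controlled by $\zeta_F(1/2)-1$, which is explicitly at most $4.383$ via the estimate (\ref{Equation:sum-of-Fibonacci sequence}). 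Peeling off the $k=0$ term $\sqrt{q}\cdot 3/\sqrt{r_0}=3$ (which accounts for the stray additive $+3$ in the theorem) and combining the remaining contributions is what produces the coefficient $20.07$ on $\sqrt{q}$.

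The main obstacle will be bookkeeping rather than any isolated inequality: (i) rigorously justifying the reduction of $1/x_k$ into the admissible domain of \textit{Lemma \ref{Corollary:expression}} at every iteration without generating extra error, paying attention to the interaction between the $\bmod\,2$ periodicity and the $\theta$-shift; (ii) ensuring that each evolving $\theta_k$ remains in $(-1/2,1/2]$ so that the uniform bound $|g(\theta_k)|\leq\pi/6$ underlying the per-step constant $3$ is preserved; and (iii) sharpening the Fibonacci tail estimate by separating the first few summands explicitly, so that the numerical constants combine to exactly $20.07\sqrt{q}+3$ rather than to a marginally larger expression.
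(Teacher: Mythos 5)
Your proposal follows essentially the same route as the paper: recursively apply the refined Paris-expansion bound (7) along the Euclidean algorithm for $(q,a)$, telescope the $1/\sqrt{x_k}$ factors, and control the accumulated error via the Fibonacci (Lam\'e) lower bounds on the remainders together with $\zeta_F(1/2)<5.383$. The only slip is the claim that the recursion ends with $N_n=0$: it ends with $r_n=1$ and $N_n\leq 1$, leaving a residual $|S_{N_n}|\leq 1$ whose telescoped contribution is an extra $\sqrt{q}$, which is precisely how the paper assembles $1+2.035\times 4.383+3\times 3.383<20.07$.
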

	
    \begin{proof}
    We use recursive expansion combined with Fibonacci series convergence to control error accumulation. Without loss of generality, let us assume that $\theta\in(-1/2, \ 1/2]$.
	Write  $N_0=N$, $q_0=q$, $q_1=a \bmod q$, and
	\begin{eqnarray*}
		&& q_2=q_0 \bmod q_1, \\
		&& q_3=q_1 \bmod q_2, \\
		&& \qquad \cdots\cdots  \\
		&& q_{k+1}=q_{k-1} \bmod q_{k}, \\
		&& q_{k+2}=q_{k} \bmod q_{k+1}, 
	\end{eqnarray*}
	where $1=q_{k+1}<q_k<\cdots< q_3<q_2<q_2<q_0$ and 
	\begin{eqnarray*}
		\gcd(q_{k+1}, q_k)&=&\gcd(q_k, q_{k-1})=\cdots=\gcd(q_3, q_2)\\
		&=&\gcd(q_2, q_1)=\gcd(q_1, q_0)=1.
	\end{eqnarray*}
	By (\ref{Equation:upper bounds 1}), we have
	\begin{eqnarray*}
		&&\left|S_{N_0}\left(\frac{q_1}{q_0}, \theta\right)\right|\leq
		\frac{1}{\sqrt{\frac{q_1}{q_0}}}\left|S_{N_1}\left(\frac{q_2}{q_1}, \theta_1\right)\right|+\frac{2.035}{\sqrt{\frac{q_1}{q_0}}}+3, \\	
		&&\left|S_{N_1}\left(\frac{q_2}{q_1}, \theta_1\right)\right|\leq
		\frac{1}{\sqrt{\frac{q_2}{q_1}}}\left|S_{N_2}\left(\frac{q_3}{q_2}, \theta_2\right)\right|+\frac{2.035}{\sqrt{\frac{q_2}{q_1}}}+3, \\	
		&& \qquad\qquad \cdots\cdots \\
		&&\left|S_{N_{k-1}}\left(\frac{q_{k}}{q_{k-1}}, \theta_{k-1}\right)\right|\leq
		\frac{1}{\sqrt{\frac{q_{k}}{q_{k-1}}}}\left|S_{N_{k}}\left(\frac{q_{k+1}}{q_{k}}, \theta_{k}\right)\right| \\
		&&\qquad\qquad\qquad\qquad\qquad\quad+\frac{2.035}{\sqrt{\frac{q_{k}}{q_{k-1}}}}+3, \\	
		&&\left|S_{N_{k}}\left(\frac{q_{k+1}}{q_{k}}, \theta_{k}\right)\right|\leq
		\frac{1}{\sqrt{\frac{q_{k+1}}{q_{k}}}}\left|S_{N_{k+1}}\left(\frac{q_{k+2}}{q_{k+1}}, \theta_{k+1}\right)\right|\\
		&&\qquad\qquad\qquad\qquad\qquad\quad+\frac{2.035}{\sqrt{\frac{q_{k+1}}{q_{k}}}}+3, 
	\end{eqnarray*}
	where $N_1\leq q_1$, $N_2\leq q_2$, $\cdots$, $N_{k+1}\leq q_{k+1}=1$. Then from (\ref{Equation:sum-of-Fibonacci sequence}) we get
	\begin{eqnarray*}
		&&\left|S_{N}\left(\frac{a}{q}, \theta\right)\right|=\left|S_{N_0}\left(\frac{q_1}{q_0}, \theta\right)\right|		\\
		&&\leq \sqrt{q_0}+2.035\sqrt{q_0}\left(\frac{1}{\sqrt{q_{k+1}}}+\frac{1}{\sqrt{q_k}}+\frac{1}{\sqrt{q_{k-1}}}\right. \\
		&&\quad\left.+\cdots
		+\frac{1}{\sqrt{q_2}}+\frac{1}{\sqrt{q_1}}\right) \\
		&&\quad+3\sqrt{q_0}\left(\frac{1}{\sqrt{q_k}}+\frac{1}{\sqrt{q_{k-1}}}+\cdots
		+\frac{1}{\sqrt{q_2}}+\frac{1}{\sqrt{q_1}}\right)+3 \\
		&&< \sqrt{q_0}\\
		&&\quad+2.035\sqrt{q_0}
		\left(\frac{1}{\sqrt{1}}+\frac{1}{\sqrt{2}}+\frac{1}{\sqrt{3}}+\frac{1}{\sqrt{5}}+\frac{1}{\sqrt{8}}+\cdots\right) \\
		&&\quad+3\sqrt{q_0}\left(\frac{1}{\sqrt{2}}+\frac{1}{\sqrt{3}}+\frac{1}{\sqrt{5}}+\frac{1}{\sqrt{8}}+\cdots\right)+3 \\
		&&=\sqrt{q_0}+2.035\sqrt{q_0}\sum_{n=2}^{\infty}\frac{1}{\sqrt{F_n}}
		+3\sqrt{q_0}\sum_{n=3}^{\infty}\frac{1}{\sqrt{F_n}}+3 \\
		&&<20.07\sqrt{q}+3.
	\end{eqnarray*}	
	\end{proof}
    
	The novelty of this theorem lies in, for the first time, confining the generalized quadratic Gauss sum magnitudes to a computable range. By translating aperiodic correlation or ambiguity functions into generalized quadratic Gauss sums, one can explicitly bound the peak aperiodic correlation or ambiguity sidelobes, thus eliminating the ``asymptotic guesswork" of prior research.
	
	\vspace{0.2in}
	\section{Proposed Order-Optimal Sequence Design from Chu Sequences}
	Leveraging the explicit bound from Section III, we present two families of order-optimal sequence sets based on Chu sequences. While the first family possesses low aperiodic correlation property, the second family is featured by an aperiodic LAZ. Through careful selection of Chu sequences, we will show that  application of the upper bound in \textit{{Theorem} \ref{Theorem:upper bounds of S_N}} becomes possible. 
	
	\subsection{Order-Optimal Sequence Set $\mathcal{C}_1$ with Low Aperiodic Correlation}
	
	
	\begin{construction}\label{Construction:Chu sequence set-1} 
		Let $K\geq 2$ be a fixed even number and let $\Delta_{\frac{K}{2}}$ be the least common multiple of $1,2,\cdots,\frac{K}{2}$. Let $L$ be a sufficiently large odd number satisfying $L\equiv 1 \ (\bmod\ \Delta_{\frac{K}{2}})$ and the least prime factor of $L$ is greater than $K$. Let us construct the sequence set as follows: 
		\begin{eqnarray*}
			&&\mathbf{c}^{r}:=\left(c_0^{r}, c_1^{r}, \cdots, c_{L-1}^{r}
			\right), \\
			&& c_t^{r}:=e\left(\frac{rt(t-1)}{2L}\right), \qquad 0\leq t\leq L-1, \\
			&&\mathcal{C}_1:=\left\{\mathbf{c}^{r}: \  r\in\left\{L-1, \ \frac{L-1}{2}, \ \frac{L-1}{3}, \cdots, \frac{L-1}{K/2}\right\} \right.\\
			&&\qquad\left.\cup
			\left\{1, \ L-\frac{L-1}{2}, \ L-\frac{L-1}{3}, \cdots, L-\frac{L-1}{K/2}\right\} \right\}.
		\end{eqnarray*}
	\end{construction}
	
	\begin{theorem}\label{Theorem:Chu sequence set-1} 
		The set $\mathcal{C}_1$ is an order-optimal sequence set with low aperiodic correlation satisfying  $|\mathcal{C}_1|=K$ and 
		$$ 
		\delta_{\max}(\mathcal{C}_1)\leq\max\left\{21\sqrt{L}, \ 0.35\sqrt{KL}\right\}.
		$$
	\end{theorem}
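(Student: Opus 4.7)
The plan is to bound $\delta_a(\mathcal{C}_1)$ and $\delta_c(\mathcal{C}_1)$ separately, using the classical estimates of Section II-B for the auto-correlation and \textit{Theorem \ref{Theorem:upper bounds of S_N}} for the cross-correlation. For the auto-correlation, I would expand $R_{\mathbf{c}^r, \mathbf{c}^r}(\tau)$ directly: the quadratic cross-terms cancel, leaving a geometric sum whose modulus equals $|T_L(r, \tau)| \leq B_r$. The root indices in $\mathcal{C}_1$ fall into three regimes: (i) $r \in \{1, L-1\}$, where Zhang--Golomb gives $B_r = \sqrt{L/4.34}$; (ii) $r \in \{(L-1)/2, (L+1)/2\}$, where Fan--Darnell--Honary gives $B_r = \sqrt{L/2.174}$; and (iii) $r \in \{(L-1)/k, L-(L-1)/k\}$ with $3 \leq k \leq K/2$, where \textit{Lemma \ref{Lemma:estimate-of-exponential-sum-1}} applies with $b \in \{1, k-1\}$ (since $L \equiv \pm 1 \pmod{r}$) giving $B_r \approx 0.48\sqrt{kL}$. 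Taking the worst case $k = K/2$ and checking the numerics $0.48\sqrt{(K/2)L} < 0.35\sqrt{KL}$ and $\sqrt{L/2.174} \leq 0.35\sqrt{KL}$ for $K \geq 4$ (with $K=2$ handled trivially) yields $\delta_a(\mathcal{C}_1) \leq 0.35\sqrt{KL}$.

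For distinct $\mathbf{c}^{r_1}, \mathbf{c}^{r_2} \in \mathcal{C}_1$, I would expand $c_t^{r_1}(c_{t+\tau}^{r_2})^*$, collect terms and reindex by $s = t+1$ to write
\[
\left|R_{\mathbf{c}^{r_1}, \mathbf{c}^{r_2}}(\tau)\right| = \left|\sum_{s=1}^{L-\tau} e\!\left(\frac{(r_1-r_2)s^2}{2L} + \theta s\right)\right|,
\]
for a suitable $\theta \in (-1/2, 1/2]$ depending on $r_1, r_2, \tau$. Assuming without loss of generality $r_1 > r_2$ so that $x = (r_1-r_2)/L \in (0,1)$, this is exactly $|S_{L-\tau}(x, \theta)|$. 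Invoking \textit{Theorem \ref{Theorem:upper bounds of S_N}} with $q = L$ then gives $|R_{\mathbf{c}^{r_1}, \mathbf{c}^{r_2}}(\tau)| < 20.07\sqrt{L} + 3 \leq 21\sqrt{L}$ for sufficiently large $L$, \emph{provided} $\gcd(r_1 - r_2, L) = 1$.

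The main obstacle is verifying this coprimality condition uniformly over all pairs. Since every $r \in \mathcal{C}_1$ has the form $(L-1)/k$ or $L-(L-1)/k$ for some $k \in \{1, \ldots, K/2\}$, the difference $r_1 - r_2$ is, modulo $L$, of the shape $(L-1)(k_2 \pm k_1)/(k_1 k_2)$ for some $k_1, k_2 \leq K/2$. The hypothesis $L \equiv 1 \pmod{\Delta_{K/2}}$ ensures $\mathrm{lcm}(k_1, k_2) \mid L-1$, so this expression simplifies to $M \cdot \ell$ where $M := (L-1)/\mathrm{lcm}(k_1, k_2)$ is an integer dividing $L-1$ and $\ell := (k_2 \pm k_1)/\gcd(k_1, k_2)$ satisfies $|\ell| \leq K$. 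Since $\gcd(M, L) \mid \gcd(L-1, L) = 1$, and the least prime factor of $L$ exceeds $K \geq |\ell|$, we conclude $\gcd(r_1 - r_2, L) = 1$ in every case. Combining the auto- and cross-correlation bounds then gives $\delta_{\max}(\mathcal{C}_1) \leq \max\{21\sqrt{L}, 0.35\sqrt{KL}\}$, as claimed.
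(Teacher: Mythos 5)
Your proposal is correct and follows essentially the same route as the paper: the case $r_1=r_2$ is handled via \textit{Lemma \ref{Lemma:estimate-of-exponential-sum-1}} (yielding $0.35\sqrt{KL}$), and the case $r_1\neq r_2$ is reduced to a generalized quadratic Gauss sum $S_{L-\tau}((r_1-r_2)/L,\theta)$ and bounded by \textit{Theorem \ref{Theorem:upper bounds of S_N}} after verifying $\gcd(r_1-r_2,L)=1$ from the congruence and least-prime-factor hypotheses on $L$. Your coprimality verification via $M\cdot\ell$ with $M\mid L-1$ and $|\ell|\leq K$ is a slightly more systematic rendering of the paper's two displayed gcd computations, and your exact reindexing $s=t+1$ merely absorbs the paper's additive slack of $2$; neither constitutes a different method.
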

	
\begin{proof}
    \textit{First, it is noted that due to the symmetry, we focus on proving the case for $0\leq \tau\leq L-1$ in the sequel. The same approach will be adopted to the proof of the other constructions}. 

    Let 
\begin{eqnarray*}
     &&r_1, r_2\in\left\{L-1, \ \frac{L-1}{2}, \ \frac{L-1}{3}, \cdots, \frac{L-1}{K/2}\right\}\\
     &&~~~\cup
	\left\{1, \ L-\frac{L-1}{2}, \ L-\frac{L-1}{3}, \cdots, L-\frac{L-1}{K/2}\right\}. 
	\end{eqnarray*}
    Assume that $r_1=r_2$ and $\tau=0$ do not hold at the same time. We have
	\begin{eqnarray*}
		&&R_{\mathbf{c}^{r_1}, \mathbf{c}^{r_2}}(\tau)=\sum_{t=0}^{L-1-\tau}c_{t}^{r_1}(c_{t+\tau}^{r_2})^{*}\\
		&&=\sum_{t=0}^{L-1-\tau}e\left(\frac{r_1t(t-1)}{2L}\right)
		e\left(-\frac{r_2(t+\tau)(t+\tau-1)}{2L}\right) \\
		&&=e\left(\frac{r_2\tau-r_2\tau^2}{2L}\right) \\
		&&\qquad\times\sum_{t=0}^{L-1-\tau}e\left(\frac{(r_1-r_2)t^2+(r_2-r_1-2r_2\tau)t}{2L}\right).
	\end{eqnarray*}
	
	Case 1): $r_1\neq r_2$. For integers $a_1, a_2, a_3, a_4$ with 
	$1\leq a_1, a_2, a_3, a_4\leq\frac{K}{2}$ and $a_1\neq a_2$, we have
	\begin{eqnarray*}
		&&\gcd\left(L-\frac{L-1}{a_1}-\left(L-\frac{L-1}{a_2}\right), \ L\right)\\
		&&\qquad=\gcd\left(\frac{L-1}{a_1}-\frac{L-1}{a_2}, \ L\right) \\
		&&\qquad=\gcd\left(\frac{(L-1)(a_2-a_1)}{a_1a_2}, \ L\right)=1, \\
		&&\gcd\left(\frac{L-1}{a_3}-\left(L-\frac{L-1}{a_4}\right), \ L\right)\\
		&&\qquad=\left(\frac{(L-1)(a_3+a_4)}{a_3a_4}, \ L\right)=1,
	\end{eqnarray*}
	which suggests that $\gcd(r_1-r_2, \ L)=1$.
	By \textit{Theorem \ref{Theorem:upper bounds of S_N}}, we obtain
	\begin{eqnarray}\label{Equation:Chu sequence set-1}
		\left|R_{\mathbf{c}^{r_1}, \mathbf{c}^{r_2}}(\tau)\right|
		&\leq& 
		\left|S_{L-\tau}\left(\frac{r_1-r_2}{L}, \ \frac{r_2-r_1-2r_2\tau}{2L}\right)\right|+2  \nonumber\\
		&\leq&	
		21\sqrt{L}.
	\end{eqnarray}
	\textit{Strictly speaking, $20.07\sqrt{L}+5$ is also a valid upper bound for $\left|R_{\mathbf{c}^{r_1}, \mathbf{c}^{r_2}}(\tau)\right|$ in (\ref{Equation:Chu sequence set-1}). For simplicity, however, we consider $21\sqrt{L}$ only. The same practice will be applied in the next section. }
    
	Case 2): $r_1=r_2$. Write $r:=r_1=r_2$ and $r=\frac{L-1}{a}$ or $L-\frac{L-1}{a}$ with $1\leq a\leq \frac{K}{2}$. By \textit{Lemma \ref{Lemma:estimate-of-exponential-sum-1}} we have
	\begin{eqnarray}\label{Equation:Chu sequence set-2}
		\left|R_{\mathbf{c}^{r_1}, \mathbf{c}^{r_2}}(\tau)\right|\leq|T_L(r, \tau)|
		\leq 0.5\sqrt{aL}\leq 0.35\sqrt{KL}.
	\end{eqnarray}

    Combining (\ref{Equation:Chu sequence set-1}) and (\ref{Equation:Chu sequence set-2}), it is asserted that 
	\begin{eqnarray*}
		\delta_{\max}(\mathcal{C}_1)&=&\max\left\{\delta_{a}(\mathcal{C}_1), \ \delta_{c}(\mathcal{C}_1)\right\} \\
		&\leq&\max\left\{21\sqrt{L}, \ 0.35\sqrt{KL}\right\}.
	\end{eqnarray*}
	This proves \textit{Theorem \ref{Theorem:Chu sequence set-1}}.
	\end{proof}
	
	\begin{remark} 
    Taking any $r\in\{1, 2, \cdots, L-1\}$ with $\gcd(r, L)=1$, there exist unique integers $b$ and $d$ satisfying
		$$
		bL-rd=\pm 1, \quad 1\leq b\leq \left\lfloor\frac{r}{2}\right\rfloor, \quad \text{and} \qquad 1\leq b\leq d.
		$$
		When $b=1$, we know that $r=\frac{L-1}{d}$. 		
		From \textit{Lemma \ref{Lemma:estimate-of-exponential-sum-1}}, we have
		$$
		\left|R_{\mathbf{c}^{r}, \mathbf{c}^{r}}(\tau)\right|\leq 0.48\sqrt{\frac{1}{r}}L=0.48\sqrt{\frac{d}{L-1}}L.
		$$
		This explains why the set $\mathcal{C}_1$ is an excellent choice for ensuring low aperiodic autocorrelation sidelobes.
	\end{remark}

    \begin{remark}
    One can see that \textit{Construction \ref{Construction:Chu sequence set-1}} includes Mow's Chu sequence pair \cite{Mow1993} as a special case for $K=2$. However, since this work deals with a generic $K$, a tighter fine-grained upper bound for $\delta_{\max}(\mathcal{C}_1)$ is not an easy task. This explains why our approach cannot directly lead to Mow's upper bound (i.e., $1.487\sqrt{L}$ for even $L$ and $1.122\sqrt{L}$ for odd $L$). 
    \end{remark}
	
    Although $\mathcal{C}_1$ excels at low aperiodic correlation, it lacks ambiguity control. 	
	Taking $r_1=r_2=\frac{L-1}{a}$, $\tau=a$, $\nu=-1$, we have
	$$
	\left|A_{\mathbf{c}^{r_1}, \mathbf{c}^{r_2}}(\tau, \nu)\right|
	=\left|\sum_{t=0}^{L-1-\tau}e\left(\frac{(\nu-r_2\tau)t}{L}\right)\right|=L-a,
	$$
	indicating that $\mathcal{C}_1$ may not have low aperiodic ambiguity sidelobes. This motivates us to 
    introduce the next sequence family with aperiodic LAZ properties. 
	
	\subsection{Order-Optimal Sequence Set $\mathcal{C}_2$ with Aperiodic LAZ Properties}
	
	
	\begin{construction}\label{Construction:Chu sequence set-AF} 
		Let $K\geq 2$ be a fixed integer and $\Delta_K$ be the least common multiple of $1,2,\cdots,K$. 
		Consider $m$ which is a fixed positive integer. Let $L$ be a sufficiently large odd number satisfying $L\equiv 1 \ (\bmod\ 2m\Delta_K)$. Define the sequence set $\mathcal{C}_2$ as follows: 
		\begin{eqnarray*}
			&&\mathbf{c}^{r}:=\left(c_0^{r}, c_1^{r}, \cdots, c_{L-1}^{r}\right), \\
			&&	c_t^{r}:=e\left(\frac{rt(t-1)}{2L}\right), \qquad 0\leq t\leq L-1, \\
			&&\mathcal{C}_2:=\left\{\mathbf{c}^{r}: \ r\in\left\{K+m, K+2m, \cdots, K+Km\right\}\right\}.
		\end{eqnarray*}
	\end{construction}
	
	\begin{theorem}\label{Theorem:Chu sequence set-AF} 
		The set $(L, K, \theta_{\max}, \Pi)$-$\mathcal{C}_2$ is an order-optimal sequence set with aperiodic LAZ $\Pi$ characterized by $Z_{X}<\frac{L}{(2m+3)K}$, $Z_{Y}<K$, $|\mathcal{C}_2|=K$ and
		$$ 
		\theta_{\max}(\mathcal{C}_2)\leq \left(1.35+\frac{2.035}{\sqrt{m}}\right)\sqrt{L}+5. 
		$$
	\end{theorem}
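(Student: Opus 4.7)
My plan is to expand the aperiodic ambiguity function as a generalized quadratic Gauss sum and to exploit the congruence $L \equiv 1 \pmod{2m\Delta_K}$ built into \textit{Construction \ref{Construction:Chu sequence set-AF}} so that the Euclidean recursion underlying \textit{Theorem \ref{Theorem:upper bounds of S_N}} terminates after a single step (rather than unfolding the full Fibonacci chain). By the usual symmetry, attention is restricted to $0 \leq \tau < Z_X$. Expanding the Chu phase gives
\[
A_{\mathbf{c}^{r_1}, \mathbf{c}^{r_2}}(\tau, \nu) = e\!\left(\frac{r_2\tau(1-\tau)}{2L}\right) \sum_{t=0}^{L-1-\tau} e\!\left(\frac{(r_1-r_2)t^2 + (r_2-r_1-2r_2\tau+2\nu)t}{2L}\right),
\]
and the analysis splits naturally into the cross-ambiguity case $r_1\neq r_2$ and the auto-ambiguity case $r_1=r_2$.

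In the cross-ambiguity case, write $r_i = K+j_i m$ with $j_i\in\{1,\ldots,K\}$, so that the leading quadratic coefficient is $q_1 := |r_1-r_2| = m|j_1-j_2|$, satisfying $m \leq q_1 \leq (K-1)m$. Since $|j_1-j_2|\leq K-1$ divides $\Delta_K$ and $L \equiv 1 \pmod{2m\Delta_K}$, I obtain $\gcd(q_1,L)=1$ together with $L \equiv 1 \pmod{q_1}$; in particular, the Euclidean chain for $q_1/L$ terminates immediately at $q_2=1$. A single application of (\ref{Equation:upper bounds 1}) then gives $|S_{L-1-\tau}(q_1/L,\theta)| \leq \sqrt{L/q_1}\,|S_{N_1}(1/q_1,\theta_1)| + 2.035\sqrt{L/q_1} + 3$ with $N_1 \leq q_1$. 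Invoking Littlewood's bound (\ref{Equation:Chu 1}) on the residual Chu-type sum yields $|S_{N_1}(1/q_1,\theta_1)| \leq 1.35\sqrt{q_1}$, and combining via $q_1 \geq m$ collapses the whole expression to $(1.35 + 2.035/\sqrt{m})\sqrt{L} + O(1)$.

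In the auto-ambiguity case with $r_1=r_2=r=K+jm$, the quadratic term cancels and the sum reduces to the geometric progression $\sum_{t=0}^{L-1-\tau} e((\nu-r\tau)t/L)$. For $\tau=0$ and $\nu\neq 0$ this is a complete nontrivial geometric period (since $|\nu|<K<L$) and vanishes identically. For $\tau\geq 1$, setting $\beta := \nu-r\tau$, the bounds $r\geq K+m$ and $|\nu|<K$ force $|\beta| \geq m\tau + 1 \neq 0$, while the LAZ constraint $\tau<L/((2m+3)K)$ keeps $|\beta|$ safely below $L/2$. The geometric-sum magnitude $|\sin(\pi\beta\tau/L)/\sin(\pi\beta/L)|$ then admits the pair of elementary bounds $\pi\tau/2$ and $L/(2|\beta|)$; balancing them using $|\beta|\geq m\tau$ produces a uniform bound of order $\sqrt{L/m}$, comfortably inside $\sqrt{L}$.

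The main obstacle is the second step in the cross-ambiguity case, namely propagating the Littlewood-type bound $1.35\sqrt{q_1}$ from a complete period of length $q_1$ to the truncated sum $|S_{N_1}(1/q_1,\theta_1)|$ with $N_1 \leq q_1$, and then carefully tracking the $O(1)$ residue (the $t=0$ boundary contribution, the unimodular phase prefactor, and the additive $3$ coming from the refined Paris expansion) so that the total stays within the theorem's additive $+5$ slack. Combining the cross- and auto-ambiguity cases then delivers $\theta_{\max}(\mathcal{C}_2) \leq (1.35 + 2.035/\sqrt{m})\sqrt{L} + 5$.
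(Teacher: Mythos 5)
Your route coincides with the paper's: one application of (\ref{Equation:upper bounds 1}), collapse of the Euclidean chain via $L\equiv 1\ (\bmod\ 2m\Delta_K)$, Littlewood's bound (\ref{Equation:Chu 1}) on the residual sum, and a three-way geometric-sum analysis in the auto case. However, the step you defer as ``the main obstacle'' is not bookkeeping to be tracked later --- it is the crux, and it is the only place in Case 1 where the LAZ constraints $Z_X<\frac{L}{(2m+3)K}$ and $Z_Y<K$ are actually used, yet your cross-ambiguity argument never invokes them. The paper resolves the obstacle by showing that \emph{no truncation occurs}: with $N=L-\tau$, $x=\frac{r_1-r_2}{L}$ and $\theta=\frac{r_2-r_1+2\nu-2r_2\tau}{2L}$ one computes
$$
Nx+\theta=(r_1-r_2)+\frac{-2r_1\tau+r_2-r_1+2\nu}{2L},\qquad
\left|\frac{-2r_1\tau+r_2-r_1+2\nu}{2L}\right|<\frac{m+1}{2m+3}+\frac{(m+2)K}{2L}<\frac12,
$$
precisely because $\tau<\frac{L}{(2m+3)K}$ and $|\nu|<K$. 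Hence the integer $M$ of \textit{Proposition \ref{Proposition: expression}} equals $r_1-r_2$ exactly, the residual sum is the \emph{complete} sum $S_{r_1-r_2}\bigl(\frac{1}{r_1-r_2},\theta_1\bigr)$, and (\ref{Equation:Chu 1}) applies verbatim. If $N_1$ could genuinely be smaller than $q_1$, as your write-up allows, you could not use (\ref{Equation:Chu 1}) (a complete-sum bound), and falling back on \textit{Theorem \ref{Theorem:upper bounds of S_N}} for the truncated inner sum would give $20.07\sqrt{q_1}+3$ and hence a leading constant of roughly $20.07$ rather than $1.35$ --- the stated inequality would not follow. So the missing idea is the explicit verification that $M=r_1-r_2$, which is exactly what the delay--Doppler window buys you.

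Two smaller points. First, passing from $S_M\bigl(\frac{L}{q_1},\cdot\bigr)$ to $S_M\bigl(\frac{1}{q_1},\cdot\bigr)$ discards the phase $\frac{(L-1)t^2}{2q_1}$, which is an integer only because $2q_1\mid 2m\Delta_K\mid L-1$; this should be said, since $L\equiv 1\ (\bmod\ q_1)$ alone is not quite enough. Second, your auto-ambiguity case (vanishing at $\tau=0$, the bound $\pi\tau/2$ for small $\tau$, and $L/(2|\beta|)$ for larger $\tau$ with $|\beta|\geq m\tau+1$ and $|\beta|<L/2$ guaranteed by the LAZ window) matches the paper's three-subcase treatment and yields an $O(\sqrt{L/m})$ contribution that is dominated by the cross term, as required.
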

	
    \begin{proof}
    Let $r_1, r_2\in\left\{K+m, K+2m, \cdots, K+Km\right\}$,  $0\leq \tau\leq Z_{X}$ and $|\nu|\leq Z_{Y}$ such that $r_1=r_2$, $\tau=0$
	and $\nu=0$ do not hold at the same time. We have
	\begin{eqnarray*}
		&&A_{\mathbf{c}^{r_1}, \mathbf{c}^{r_2}}(\tau, \nu)
		=\sum_{t=0}^{L-1-\tau}c_{t}^{r_1}(c_{t+\tau}^{r_2})^{*}
		e\left(\frac{\nu t}{L}\right) \\
		&&=\sum_{t=0}^{L-1-\tau}e\left(\frac{r_1t(t-1)}{2L}\right) \\
		&&\quad\times		e\left(-\frac{r_2(t+\tau)(t+\tau-1)}{2L}\right)	e\left(\frac{\nu t}{L}\right) \\
		&&=e\left(\frac{r_2\tau-r_2\tau^2}{2L}\right) \\
		&&\quad\times\sum_{t=0}^{L-1-\tau}e\left(\frac{(r_1-r_2)t^2+(r_2-r_1+2\nu-2r_2\tau)t}{2L}\right).
	\end{eqnarray*}
	
	\noindent Case 1): $r_1\neq r_2$. Without loss of generality, we assume that $r_1>r_2$. 
	Clearly,
	\begin{eqnarray*}
		&&(L-\tau)\cdot\frac{r_1-r_2}{L}+\frac{r_2-r_1+2\nu-2r_2\tau}{2L} \\
		&&\qquad	=r_1-r_2+\frac{-2r_1\tau+r_2-r_1+2\nu}{2L}
	\end{eqnarray*}
	and 
	\begin{eqnarray*}
	\left|\frac{-2r_1\tau+r_2-r_1+2\nu}{2L}\right|<\frac{m+1}{2m+3}+\frac{(m+2)K}{2L}<\frac{1}{2}.
\end{eqnarray*}
	By using (\ref{Equation:upper bounds 1}), one can show that 
	\begin{eqnarray*}
		&& \left|A_{\mathbf{c}^{r_1}, \mathbf{c}^{r_2}}(\tau, \nu)\right|\\		
		&&\leq\left|S_{L-\tau}\left(\frac{r_1-r_2}{L}, \ \frac{r_2-r_1+2\nu-2r_2\tau}{2L}\right)\right|+2\nonumber \\
		&&\leq\frac{1}{\sqrt{\frac{r_1-r_2}{L}}}\left|S_{r_1-r_2}
		\left(\frac{L}{r_1-r_2}, \ \theta_1\right)\right|+\frac{2.035}{\sqrt{\frac{r_1-r_2}{L}}}+5  \nonumber \\
		&&=\frac{1}{\sqrt{\frac{r_1-r_2}{L}}}\left|S_{r_1-r_2}\left(\frac{1}{r_1-r_2}, \ \theta_1\right)\right|+\frac{2.035}{\sqrt{\frac{r_1-r_2}{L}}}+5,
	\end{eqnarray*}
	where 
	$$
	\theta_1=\frac{r_1-r_2-2\nu+2r_2\tau}{2(r_1-r_2)} \ \bmod 1, 
	$$
	with $\theta_1\in(-1/2, \ 1/2]$. 
{From (\ref{Equation:Chu 1}), 
	we have
	\begin{eqnarray*}
		\left|S_{r_1-r_2}\left(\frac{1}{r_1-r_2}, \ \theta_1\right)\right|
		\leq 1.35\sqrt{r_1-r_2}.
	\end{eqnarray*}    }
	Hence,	
	\begin{eqnarray}\label{Equation:Chu sequence set-LA-2}
		&&\left|A_{\mathbf{c}^{r_1}, \mathbf{c}^{r_2}}(\tau, \nu)\right| \nonumber\\
		&&\leq\frac{1}{\sqrt{\frac{r_1-r_2}{L}}}\cdot 1.35\sqrt{r_1-r_2}
		+\frac{2.035}{\sqrt{\frac{r_1-r_2}{L}}}+5  \nonumber\\
		&&\leq \left(1.35+\frac{2.035}{\sqrt{m}}\right)\sqrt{L}+5. 
	\end{eqnarray}

	\noindent Case 2): $r_1=r_2$. We have
	\begin{eqnarray*}
		&&\left|A_{\mathbf{c}^{r_1}, \mathbf{c}^{r_2}}(\tau, \nu)\right|
		=\left|\sum_{t=0}^{L-1-\tau}e\left(\frac{(\nu-r_2\tau)t}{L}\right)\right|\\
		&&\qquad=\left|\frac{1-e\left(\frac{(r_2\tau-\nu)\tau}{L}\right)}{1-e\left(\frac{\nu-r_2\tau}{L}\right)}\right|=\left|\frac{\sin\frac{\pi(v-r_2\tau)\tau}{L}}
		{\sin\frac{\pi(v-r_2\tau)}{L}}\right|.
	\end{eqnarray*}

    \begin{enumerate}
	\item When $\tau=0$, we get $\nu\neq 0$ and 
	\begin{eqnarray}\label{Equation:Chu sequence set-LA-3}
		\left|A_{\mathbf{c}^{r_1}, \mathbf{c}^{r_2}}(0, \nu)\right|=0.
	\end{eqnarray}
	
	\item Assume that $1\leq \tau\leq\sqrt{\frac{L}{4r_2}}$. Clearly, 
	$$
	\frac{1}{L}\leq\left|\frac{v-r_2\tau}{L}\right|\leq \frac{r_2\sqrt{\frac{L}{4r_2}}+K}{L}\leq\sqrt{\frac{r_2}{L}}. 
	$$
	Furthermore, 
	$$
	\frac{1}{L}\leq\left|\frac{(v-r_2\tau)\tau}{L}\right|\leq 
	\frac{1}{2}.
	$$
	Note that $2y\leq\sin\pi y\leq \pi y$ for $0\leq y\leq \frac{1}{2}$. Thus we have
	\begin{eqnarray}\label{Equation:Chu sequence set-LA-4}
		\left|A_{\mathbf{c}^{r_1}, \mathbf{c}^{r_2}}(\tau, \nu)\right|&\leq&
		\frac{\left|\frac{\pi(v-r_2\tau)\tau}{L}\right|}{2\left|\frac{v-r_2\tau}{L}\right|}=\frac{\pi|\tau|}{2} \nonumber \\
		&\leq&\frac{\pi}{4}\sqrt{\frac{L}{r_2}}.
	\end{eqnarray}
	
	\item Suppose that $\sqrt{\frac{L}{4r_2}}\leq\tau\leq \frac{L}{(2m+3)K}$. We get
	\begin{eqnarray*}
	&&\frac{1}{4}\sqrt{\frac{r_2}{L}}\leq\frac{r_2\sqrt{\frac{L}{4r_2}}-K}{L}
		\leq\left|\frac{v-r_2\tau}{L}\right|\\
		&&\leq \frac{(m+1)K\cdot\frac{L}{(2m+3)K}+K}{L}\leq\frac{1}{2}.
	\end{eqnarray*}
	Hence,
	\begin{eqnarray}\label{Equation:Chu sequence set-LA-5}
		&&\left|A_{\mathbf{c}^{r_1}, \mathbf{c}^{r_2}}(\tau, \nu)\right|
		\leq\left|\frac{1}
		{\sin\frac{\pi(v-r_2\tau)}{L}}\right|\leq
		\frac{1}{2\left|\frac{v-r_2\tau}{L}\right|}\nonumber \\
		&&\leq\frac{1}{2\cdot \frac{1}{4}\sqrt{\frac{r_2}{L}}}=2\sqrt{\frac{L}{r_2}}.
	\end{eqnarray}
	\end{enumerate}
    
	Combining (\ref{Equation:Chu sequence set-LA-2}), 
	(\ref{Equation:Chu sequence set-LA-3}), 
	(\ref{Equation:Chu sequence set-LA-4}) and (\ref{Equation:Chu sequence set-LA-5}), it is asserted that
	$$
	\theta_{\max}(\mathcal{C}_2)\leq \left(1.35+\frac{2.035}{\sqrt{m}}\right)\sqrt{L}+5.  
	$$
	This proves \textit{Theorem \ref{Theorem:Chu sequence set-AF}}.
	\end{proof}
	
    \begin{remark}
    There is an interesting trade-off owns by $\mathcal{C}_2$ that a larger $m$ leads to improved ambiguity tolerance (i.e., smaller $\theta_{\max}(\mathcal{C}_2)$), but at the price of a reduced LAZ size (i.e., smaller $Z_X$). In practical wireless systems, an appropriate value of $m$ may be chosen based on the maximum channel  delay and the targeted ambiguity tolerance. 
    \end{remark}
	
	\vspace{0.2in}
	\section{Proposed Order-Optimal Sequence Design from Alltop Sequences}
	
	In this section, we exploit Alltop sequences to carry out order-optimal sequence design. The rationale is that, although these sequences are characterized by their cubic exponential phase terms, one deals with the bounding of generalized quadratic Gauss sums for analyzing their aperiodic correlation/ambiguity properties.  
	
	\subsection{Order-Optimal Sequence Set $\mathcal{A}_1$ with Low Aperiodic Correlation}
	In the context of Section II-A, let us define the full set of Alltop sequences as 
    $\mathcal{A}_1:=\left\{\mathbf{a}^{r}: \ r\in\left\{1, \ 2, \  \cdots, p\right\}\right\}$. 

	\begin{theorem}\label{Theorem:cubic polynomials-1} 
		The set $\mathcal{A}_1$ is an order-optimal sequence set with low aperiodic correlation satisfying 
		$$ 
		\delta_{\max}(\mathcal{A}_1)\leq 21\sqrt{p}.
		$$
	\end{theorem}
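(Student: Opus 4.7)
The plan is to adapt the strategy of \textit{Theorem \ref{Theorem:Chu sequence set-1}} for $\mathcal{C}_1$, exploiting the crucial fact that although the Alltop phase $e((t^3+rt)/p)$ is cubic in $t$, the cubic terms cancel in the product $a_t^{r_1}\overline{a_{t+\tau}^{r_2}}$, leaving an exponential sum that is quadratic in $t$. This reduction brings the aperiodic correlation back into the realm of generalized incomplete quadratic Gauss sums, making \textit{Theorem \ref{Theorem:upper bounds of S_N}} directly applicable.

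First I would expand $(t+\tau)^3 = t^3 + 3t^2\tau + 3t\tau^2 + \tau^3$ and pull out the $t$-independent phase to obtain
$$
R_{\mathbf{a}^{r_1},\mathbf{a}^{r_2}}(\tau) = e\!\left(\frac{-\tau^3 - r_2\tau}{p}\right) \sum_{t=0}^{p-1-\tau} e\!\left(\frac{-3\tau\, t^2 + (r_1-r_2-3\tau^2)\, t}{p}\right).
$$
I would then handle $\tau=0$ separately: the inner sum collapses to the complete geometric sum $\sum_{t=0}^{p-1} e((r_1-r_2)t/p)$, which vanishes whenever $r_1\not\equiv r_2\pmod p$ (so every $\tau=0$ cross-correlation is zero), and equals the main-peak value $p$ when $r_1=r_2$ (not counted in $\delta_{\max}$). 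This reduces the task to $1\leq \tau\leq p-1$.

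For $\tau$ in this range, the coprimality $\gcd(6\tau,p)=1$ holds because $p\geq 5$ is prime (hence $p\neq 3$) and $\tau\not\equiv 0\pmod p$. I would then pick $a\equiv -6\tau\pmod{2p}$ in the window $(-p,p)$, so that the coefficient of $t^2$ in the exponent takes the form $a/(2p)=x/2$ with $x=a/p\in(-1,1)\setminus\{0\}$ and $\gcd(|a|,p)=1$. If $a>0$, \textit{Theorem \ref{Theorem:upper bounds of S_N}} applies directly with $q=p$; if $a<0$, I would first take the complex conjugate of the inner sum, which replaces the $t^2$-coefficient by $|a|/p$ and negates the linear coefficient without changing the modulus, and then apply the theorem. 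Accounting for the small mismatch between the index ranges of the correlation sum ($0\leq t\leq p-1-\tau$) and of $S_{p-\tau}$ ($1\leq t\leq p-\tau$) introduces an error of at most $2$ in modulus, yielding
$$
\big|R_{\mathbf{a}^{r_1},\mathbf{a}^{r_2}}(\tau)\big| \leq \big|S_{p-\tau}(|x|,\theta)\big| + 2 \leq 20.07\sqrt{p}+5 \leq 21\sqrt{p}
$$
for $p$ sufficiently large, uniformly in $r_1,r_2$ and $\tau$.

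The subtlest step is the modular reduction: one must verify that an equivalent $x\in(0,1)$ can always be found (after possible conjugation) and that the resulting numerator is coprime to $p$, so that the hypotheses of \textit{Theorem \ref{Theorem:upper bounds of S_N}} are met uniformly. Once that bookkeeping is in place, the remaining argument is essentially a mechanical translation of the Chu analysis of \textit{Theorem \ref{Theorem:Chu sequence set-1}}, with the cubic-phase cancellation in the Alltop product playing the role that the linear cross-term manipulation played there; notably, no $r_1=r_2$ versus $r_1\neq r_2$ split is needed because the $t^2$-coefficient depends only on $\tau$.
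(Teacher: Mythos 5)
Your proposal is correct and follows essentially the same route as the paper: cancel the cubic terms in the Alltop product to obtain an incomplete quadratic Gauss sum with $t^2$-coefficient $-3\tau/p$, split on $\tau=0$ (complete geometric sum vanishes) versus $\tau\neq 0$ (apply \textit{Theorem \ref{Theorem:upper bounds of S_N}} using $\gcd(6\tau,p)=1$), and absorb the index-range mismatch into an additive error of $2$. Your extra bookkeeping normalizing the coefficient into the form $a/(2p)$ with $a\in(0,p)$ coprime to $p$ (via reduction mod $2p$ and conjugation) is a more careful treatment of a step the paper leaves implicit, but it is the same argument.
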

	
    \begin{proof}
    Let $r_1, r_2\in\left\{1, 2, \cdots, p\right\}$ and $0\leq \tau\leq p-1$ such that $r_1=r_2$ and $\tau=0$ do not hold at the same time. We have
	\begin{eqnarray*}
		&&R_{\mathbf{a}^{r_1}, \mathbf{a}^{r_2}}(\tau)
		=\sum_{t=0}^{p-1-\tau}a_{t}^{r_1}(a_{t+\tau}^{r_2})^{*} \\
		&&	=\sum_{t=0}^{p-1-\tau}e\left(\frac{t^3+r_1t}{p}\right)
		e\left(-\frac{(t+\tau)^3+r_2(t+\tau)}{p}\right) \\
		&&=e\left(-\frac{r_2\tau+\tau^3}{p}\right) \\
		&&\quad\times		\sum_{t=0}^{p-1-\tau}e\left(\frac{-3\tau t^2+(r_1-3\tau^2-r_2)t}{p}\right).
	\end{eqnarray*}
	
	\noindent Case 1): $\tau\neq 0$. Clearly, $\gcd(-6\tau, \ p)=1$ since $p>3$ is a prime number. 
	By \textit{Theorem \ref{Theorem:upper bounds of S_N}}, we have
	\begin{eqnarray}\label{Equation:cubic polynomials-1}
		\left|R_{\mathbf{a}^{r_1}, \mathbf{a}^{r_2}}(\tau)\right|
		&\leq& 
		\left|S_{p-\tau}\left(\frac{-6\tau}{p}, \ \frac{r_1-3\tau^2-r_2}{p}\right)\right|+2 \nonumber \\
		&\leq& 		21\sqrt{p}.
	\end{eqnarray}
	
	\noindent Case 2): $\tau=0$. For $r_1\neq r_2$, we have 
	\begin{eqnarray}\label{Equation:cubic polynomials-2}
		R_{\mathbf{a}^{r_1}, \mathbf{a}^{r_2}}(\tau)=
		\sum_{t=0}^{p-1}e\left(\frac{(r_1-r_2)t}{p}\right)=0.		
	\end{eqnarray}
	
	Combining (\ref{Equation:cubic polynomials-1}) and (\ref{Equation:cubic polynomials-2}), it is asserted that
	$$
	\delta_{\max}(\mathcal{A}_1)\leq 21\sqrt{p}.
	$$
	This proves \textit{Theorem \ref{Theorem:cubic polynomials-1}}.
	\end{proof}
	

	\subsection{Order-Optimal Sequence Set $\mathcal{A}_2$ with both Low Aperiodic Correlation and Ambiguity Properties}

    	\begin{construction}\label{Construction:cubic polynomials set-AF} 
		Let $K\geq 2$ be a fixed integer and let $p>3$ be a large prime. In the context of Section II-A, let us define the following subset of Alltop sequences: 
		\begin{eqnarray*}
			&&\mathcal{A}_2:=\left\{\mathbf{a}^{r}: \ r\in\left\{\left\lfloor\frac{p}{K}\right\rfloor, \ 2\left\lfloor\frac{p}{K}\right\rfloor, \  \cdots,  K\left\lfloor\frac{p}{K}\right\rfloor\right\}\right\}.
		\end{eqnarray*}	
	\end{construction}
	
	\begin{theorem}\label{Theorem:cubic polynomials set-AF} 
		The set $(p, K, \theta_{\max}, \Pi)$-$\mathcal{A}_2$ is an order-optimal sequence set with LAZ $\Pi$ of $Z_{X}<p$, $Z_{Y}<\left\lfloor\frac{p}{K}\right\rfloor$, $|\mathcal{A}_2|=K$ and
		$$ 
		\delta_{\max}(\mathcal{A}_2)\leq 21\sqrt{p}, \qquad	\theta_{\max}(\mathcal{A}_2)\leq 21\sqrt{p}. 
		$$
	\end{theorem}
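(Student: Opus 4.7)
The bound $\delta_{\max}(\mathcal{A}_2)\le 21\sqrt{p}$ is immediate from \textit{Theorem \ref{Theorem:cubic polynomials-1}}, since $\mathcal{A}_2\subseteq \mathcal{A}_1$. The substantive task is therefore to establish $\theta_{\max}(\mathcal{A}_2)\le 21\sqrt{p}$, and the plan is to follow the template used in the proof of \textit{Theorem \ref{Theorem:cubic polynomials-1}} with the Doppler shift absorbed into the linear coefficient of the resulting generalized quadratic Gauss sum.

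First, I will expand $A_{\mathbf{a}^{r_1},\mathbf{a}^{r_2}}(\tau,\nu)$ using $a_t^r = e((t^3+rt)/p)$. The crucial algebraic observation is that the cubic contributions $t^3$ from $a_t^{r_1}$ and $\overline{a_{t+\tau}^{r_2}}$ cancel identically, leaving a phase that is quadratic in $t$. Concretely, the ambiguity becomes $e(\cdot)\sum_{t=0}^{p-1-\tau} e\bigl((-3\tau t^2 + (r_1-r_2-3\tau^2+\nu)t)/p\bigr)$, where the prefactor depends only on $\tau$ and $r_2$. Thus the Doppler shift $\nu$ merely perturbs the linear coefficient and leaves the leading quadratic coefficient $-3\tau/p$ untouched, which is precisely the situation already analyzed for the aperiodic correlation in the proof of \textit{Theorem \ref{Theorem:cubic polynomials-1}}.

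With this reduction, I will split into two cases according to $\tau$. For $\tau\neq 0$, since $p>3$ is prime and $0<\tau<p$, we have $\gcd(6\tau,p)=1$; applying \textit{Theorem \ref{Theorem:upper bounds of S_N}} with $q=p$ (after using $|S_N(x,\theta)|=|S_N(-x,-\theta)|$ to satisfy the hypothesis $x\in(0,1)$) yields $|A_{\mathbf{a}^{r_1},\mathbf{a}^{r_2}}(\tau,\nu)|\le 20.07\sqrt{p}+3+2\le 21\sqrt{p}$, mirroring the correlation bound verbatim. For $\tau=0$ the quadratic term vanishes and the ambiguity collapses to the geometric character sum $\sum_{t=0}^{p-1} e((r_1-r_2+\nu)t/p)$, which equals $0$ whenever $r_1-r_2+\nu\not\equiv 0\pmod p$.

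The delicate part of the argument, in my view, is verifying this non-vanishing for \emph{every} admissible triple $(r_1,r_2,\nu)\neq(r_2,r_2,0)$ inside the LAZ. The key is the explicit spacing built into $\mathcal{A}_2$: since $r_1,r_2\in\{\lfloor p/K\rfloor,2\lfloor p/K\rfloor,\ldots,K\lfloor p/K\rfloor\}$, the difference $r_1-r_2$ is an integer multiple of $\lfloor p/K\rfloor$ with $|r_1-r_2|\le(K-1)\lfloor p/K\rfloor<p$, while the LAZ constraint $|\nu|<\lfloor p/K\rfloor$ keeps $r_1-r_2+\nu$ inside $(-p,p)$ and, unless $r_1=r_2$ and $\nu=0$, bounded away from $0$ (because a nonzero multiple of $\lfloor p/K\rfloor$ cannot be cancelled by something of strictly smaller absolute value). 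Together these properties give $r_1-r_2+\nu\not\equiv 0\pmod p$, so the geometric sum vanishes and the proof is complete.
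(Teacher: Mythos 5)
Your proposal is correct and follows essentially the same route as the paper: cancel the cubic terms to reduce the ambiguity function to a generalized quadratic Gauss sum with quadratic coefficient $-6\tau/(2p)$, apply \textit{Theorem \ref{Theorem:upper bounds of S_N}} for $\tau\neq 0$, and observe that the $\tau=0$ case collapses to a vanishing geometric sum. Your explicit verification that $r_1-r_2+\nu\not\equiv 0\pmod p$ — using the $\lfloor p/K\rfloor$ spacing of $\mathcal{A}_2$ together with $Z_Y<\lfloor p/K\rfloor$ — is a detail the paper merely asserts, so you have if anything been more careful than the published argument.
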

	
    \begin{proof}
    By \textit{Theorem \ref{Theorem:cubic polynomials-1}}, it can be readily shown that $\delta_{\max}(\mathcal{A}_2)\leq 21\sqrt{p}$. Next, we focus on the analysis of its aperiodic tolerance. 
	
	Let $r_1, r_2\in\left\{\left\lfloor\frac{p}{K}\right\rfloor, \ 2\left\lfloor\frac{p}{K}\right\rfloor, \  \cdots,  K\left\lfloor\frac{p}{K}\right\rfloor\right\}$,  $0\leq\tau\leq Z_{X}$ and $|\nu|\leq Z_{Y}$ such that $r_1=r_2$, $\tau=0$
	and $\nu=0$ do not hold at the same time. We have
	\begin{eqnarray*}
		&&	A_{\mathbf{a}^{r_1}, \mathbf{a}^{r_2}}(\tau, \nu)
		=\sum_{t=0}^{p-1-\tau}a_{t}^{r_1}(a_{t+\tau}^{r_2})^{*}
		e\left(\frac{\nu t}{p}\right) \\
		&&=\sum_{t=0}^{p-1-\tau}e\left(\frac{t^3+r_1t}{p}\right) \\
		&&\quad\times		e\left(-\frac{(t+\tau)^3+r_2(t+\tau)}{p}\right)e\left(\frac{\nu t}{p}\right) \\
		&&=e\left(-\frac{r_2\tau+\tau^3}{p}\right) \\
		&&\quad\times\sum_{t=0}^{p-1-\tau}e\left(\frac{-3\tau t^2+(r_1-3\tau^2-r_2+\nu)t}{p}\right).
	\end{eqnarray*}
	
	\noindent Case 1): $\tau\neq 0$. Clearly, $\gcd(-6\tau, \ p)=1$ since $p>3$ is a prime number. 
	By \textit{Theorem \ref{Theorem:upper bounds of S_N}}, we have 
	\begin{eqnarray}\label{Equation:cubic polynomials set-AF-1}
		&&\left|A_{\mathbf{a}^{r_1}, \mathbf{a}^{r_2}}(\tau, \nu)\right| \nonumber\\
		&&\leq\left|S_{p-\tau}\left(\frac{-6\tau}{p}, \ \frac{r_1-3\tau^2-r_2+\nu}{p}\right)\right|+2 \nonumber \\
		&&\leq	21\sqrt{p}.
	\end{eqnarray}
	
	\noindent Case 2): $\tau=0$. Since $r_1=r_2$ and $\nu=0$ cannot hold at the same time, we have 
	\begin{eqnarray}\label{Equation:cubic polynomials set-AF-2}
		R_{\mathbf{a}^{r_1}, \mathbf{a}^{r_2}}(\tau)=
		\sum_{t=0}^{p-1}e\left(\frac{(r_1-r_2+\nu)t}{p}\right)=0.		
	\end{eqnarray}
	
	Combining (\ref{Equation:cubic polynomials set-AF-1}) and (\ref{Equation:cubic polynomials set-AF-2}), it is asserted that 
	$$
	\theta_{\max}(\mathcal{A}_2)\leq 21\sqrt{p}.
	$$
	This proves \textit{Theorem \ref{Theorem:cubic polynomials set-AF}}.
	\end{proof}
	
    \begin{remark}
    $\mathcal{A}_2$ is rather intriguing as it possesses both low aperiodic correlation and ambiguity properties for the entire time-shift window, meaning that it may be adopted to cope with mobile environments with large delays. For example, a satellite system generally suffers from long propagation delays which are hard to be compensated. In the meanwhile, due to the high mobility, it also suffers from the Doppler effect. However, a practical satellite system only needs to deal with certain residual Doppler shift as one can always compensate the estimated/predicted Doppler by knowing the trajectory and moving speed of a satellite. In this case, it is interesting to exploit $\mathcal{A}_2$ for achieving enhanced communication and/or sensing performance for multiuser or multi-antenna satellite systems. 
    \end{remark}
	
	\vspace{0.2in}
	\section{Conclusions and Future Works}
	
This paper has addressed two long-standing closely-related bottlenecks at the intersection of number theory and sequence design: explicitly bounding the generalized quadratic Gauss sums and designing asymptotically order-optimal aperiodic polyphase sequence sets. 
	
We first derived an explicit upper bound for generalized quadratic Gauss sums (i.e., $|S_N(a/q, \theta)| < 20.07\sqrt{q} + 3$) through refined asymptotic expansion error analysis and by leveraging Fibonacci series convergence. Our derived upper bound provides a universal ``measurement tool" for analyzing sequences involving incomplete correlation/ambiguity sums. 
		
We then constructed four order-optimal aperiodic polyphase sequence sets with the aid of the aforementioned upper bound. Decimated Chu sequences were employed to construct $\mathcal{C}_1$ (for $K\geq 2$) with low aperiodic sidelobes and $\mathcal{C}_2$ with asymptotical LAZ properties. It is noted that $\mathcal{C}_1$ reduces to Mow's order-optimal Chu sequence pair for $K=2$. Besides, it was revealed that 1) the entire Alltop sequence set (i.e., $\mathcal{A}_1$)\footnote{Although it is known that Alltop sequences achieve the Welch bound on periodic cross-correlation, no further results on their aperiodic correlation properties have been reported before, to the best of our knowledge.} possesses asymptotically optimal low aperiodic correlation sidelobes; and 2) the Alltop sequence subset $\mathcal{A}_2$ achieves simultaneous asymptotic optimality with both low aperiodic correlation and ambiguity for the entire time-shift window.

As a future work, it would be interesting to understand if a tighter upper bound can be derived for the generalized quadratic Gauss sums and how it can be applied in other areas (e.g., physics, quantum information, spectral analysis). Secondly, new constructions of order-optimal aperiodic polyphase sequence sets that are not dependent on Chu or Alltop sequences\footnote{By taking advantage of the derived upper bound for generalized quadratic Gauss sums, it is possible for us to design other order-optimal aperiodic sequence sets arising from Chu or Alltop sequences. But it is more important to go beyond these two families of seed sequences.} are of interest to both the number theory and sequence design communities. Finally, the analysis and evaluation of these designed order-optimal aperiodic polyphase sequence sets in modern wireless systems (e.g., high mobility communication systems, and ISAC systems) are worthy of further study. 
		
\section{Acknowledgement}
Z. Liu would like to thank Professor Wai Ho Mow in the Hong Kong University of Science and Technology (HKUST) who introduced to him the order-optimal aperiodic polyphase sequence design problem during his visit to HKUST in June-July 2013. Some interesting observations through numerical computations were obtained but they got stuck due to the lack of relevant tools, until Z. Liu met H. Liu in January 2025.   

	\bigskip\bigskip

\end{document}